% VLDB template version of 2020-08-03 enhances the ACM template, version 1.7.0:
% https://www.acm.org/publications/proceedings-template
% The ACM Latex guide provides further information about the ACM template
\PassOptionsToPackage{dvipsnames}{xcolor}

\documentclass[sigconf, nonacm]{acmart}

%% have balanced columns on the last page 
\usepackage{multicol}
%\usepackage{balance}

%% The following content must be adapted for the final version
% paper-specific
\newcommand\vldbdoi{XX.XX/XXX.XX}
\newcommand\vldbpages{XXX-XXX}
% issue-specific
\newcommand\vldbvolume{14}
\newcommand\vldbissue{1}
\newcommand\vldbyear{2020}
% should be fine as it is
\newcommand\vldbauthors{\authors}
\newcommand\vldbtitle{\shorttitle} 
% leave empty if no availability url should be set
\newcommand\vldbavailabilityurl{https://github.com/yannramusat/TPG}
% whether page numbers should be shown or not, use 'plain' for review versions, 'empty' for camera ready
\newcommand\vldbpagestyle{plain} 

%% minted with a trick to avoid red boxes
%\usepackage[finalizecache=true,cachedir=.]{minted}
\usepackage[frozencache=true,cachedir=.]{minted}
\AtBeginEnvironment{minted}{%
  \renewcommand{\fcolorbox}[4][]{#4}}

\usepackage{etoolbox}

\usepackage{xcolor}
\definecolor{CypherGreen}{RGB}{0, 120, 0}
\usepackage{listings}
%% to use verb in captions
\usepackage{cprotect}

%% compact description
\usepackage{enumitem}

%% line numbers in listings
\lstset{
  numbers=left,
  stepnumber=1,    
  firstnumber=1,
  numberfirstline=true
}

\makeatletter
\AtBeginEnvironment{noerr}{\dontdofcolorbox}
\def\dontdofcolorbox{\renewcommand\fcolorbox[4][]{##4}}
\makeatother

%% tikz and Pierre's trick for pretty print of GPC rules
\usepackage{tikz}
\usetikzlibrary{calc}

\newcommand{\arr}[3]{%
  \begin{tikzpicture}[baseline={(0,-0.1)},yscale=0.7]%
  \node (up) at (0,.275) {\( \scriptstyle #2 \) };
  \node (down) at (0,-.25) { \(\scriptstyle #3 \) };
  \path let \p1=(up.west),\p2=(down.west) in coordinate (left) at ({min(\x1,\x2)},0);
  \path let \p1=(up.east),\p2=(down.east) in coordinate (right) at ({max(\x1,\x2)},0);
  \draw[#1] (left) -- (right);
\end{tikzpicture}}

%% apxproof
\usepackage[bibliography=common, appendix=append]{apxproof}
\newtheoremrep{theorem}{Theorem}[section]
\newtheoremrep{lemma}[theorem]{Lemma}
\newtheoremrep{proposition}[theorem]{Proposition}

%\renewcommand{\appendixprelim}{\clearpage\twocolumn}

%% algorithm
\usepackage{algorithm}
\usepackage[noend]{algpseudocode} % sustitute for \usepackage[noend]{algpseudocode} to be more compact

%% subfigures
\usepackage{caption}
\usepackage{subcaption}

\begin{document}
\title{Transforming Property Graphs (Extended Version)\footnote{A shorter version of this paper has been accepted for publication in VLDB 2024.}}

\author{Angela Bonifati}
\affiliation{%
  \institution{Lyon 1 Univ., Liris CNRS \& IUF}
  %\country{France}
}
\email{angela.bonifati@univ-lyon1.fr}

\author{Filip Murlak}
\affiliation{%
  \institution{Univ. of Warsaw}
  %\country{Poland}
}
\email{fmurlak@mimuw.edu.pl}

\author{Yann Ramusat}
\orcid{0000-0001-5109-3700}
\affiliation{%
  \institution{Lyon 1 Univ., Liris CNRS}
  %\country{France}
}
\email{yann.ramusat@liris.cnrs.fr}

\begin{abstract}
    In this paper, we study a declarative framework for specifying transformations of property graphs. 
    In order to express such transformations, we leverage queries formulated in the Graph Pattern Calculus (GPC), which is an abstraction of the common core of recent standard graph query languages, GQL and SQL/PGQ.
    In contrast to previous frameworks targeting graph topology only, we focus on the impact of data values on the transformations---which is crucial in addressing users' needs. 
    In particular, we study the complexity of checking if the transformation rules do not specify conflicting values for properties, and we show this is closely related to the  satisfiability problem for GPC. 
    We prove that both problems are \textsc{PSpace}-complete.

    We have implemented our framework in openCypher. We show the flexibility and usability of our framework by leveraging 
    an existing data integration benchmark, adapting it to our needs.
    We also evaluate the incurred overhead of detecting potential inconsistencies at run-time, and the impact of several optimization tools in a Cypher-based graph database, 
    by providing a comprehensive comparison of different implementation variants.
    The results of our experimental study show that our framework exhibits large practical benefits for transforming property graphs compared to ad-hoc transformation scripts.
\end{abstract}

\maketitle

%%% do not modify the following VLDB block %%
%%% VLDB block start %%%
\pagestyle{\vldbpagestyle}
\begingroup\small\noindent\raggedright\textbf{PVLDB Reference Format:}
\vldbauthors. \vldbtitle. PVLDB, \vldbvolume(\vldbissue): \vldbpages, \vldbyear.\\
\href{https://doi.org/\vldbdoi}{doi:\vldbdoi}
\endgroup
\begingroup
\renewcommand\thefootnote{}\footnote{
\noindent
%This work is licensed under the Creative Commons BY-NC-ND 4.0 International License. Visit \url{https://creativecommons.org/licenses/by-nc-nd/4.0/} to view a copy of this license. %For any use beyond those covered by this license, obtain permission by emailing \href{mailto:info@vldb.org}{info@vldb.org}. Copyright is held by the owner/author(s). Publication %rights licensed to the VLDB Endowment. \\
%\raggedright Proceedings of the VLDB Endowment, Vol. \vldbvolume, No. \vldbissue\ %
%ISSN 2150-8097. \\
%\href{https://doi.org/\vldbdoi}{doi:\vldbdoi} \\
}\addtocounter{footnote}{-1}\endgroup
%%% VLDB block end %%%

%%% do not modify the following VLDB block %%
%%% VLDB block start %%%
\ifdefempty{\vldbavailabilityurl}{}{
\vspace{.3cm}
\begingroup\small\noindent\raggedright\textbf{PVLDB Artifact Availability:}\\
The source code, data, and/or other artifacts have been made available at \url{\vldbavailabilityurl}.
\endgroup
}
%%% VLDB block end %%%

\section{Introduction}
\label{introduction}

Query languages for property graphs---those supported by existing systems, such as Neo4j’s openCypher~\cite{francis_cypher_2018} or Oracle’s PGQL~\cite{10.1145/2960414.2960421}, 
and those codified in international standards, such as GQL and SQL/PGQ~\cite{francis_researchers_2023}---define their semantics in terms of sets of tuples.
This is inadequate for data interoperability tasks such as data migration or data integration, where outputs of some queries are to be fed directly to other queries. To support this kind of \emph{composability}, queries should be able to output property graphs rather than sets of tuples. Such queries can be seen as \emph{transformations}, turning an input property graph into an output property graph. 

Interoperability of graph data has received little attention so far, compared to the relational and XML data models~\cite{10.5555/1941440}. 
Notable research in the area \cite{10.1145/2448496.2448520, 10.1145/3584372.3588654} relies on the simplified graph data model that had been devised to provide the foundations for querying the topology of graphs with formalisms such as conjunctive regular path queries (CRPQs)~\cite{10.1145/2463664.2465216} or regular queries~\cite{vardi_theory_2016}. As the simplified graph data model ignores the presence of properties (key-value pairs stored in nodes and edges), it is too far from the property graph models used in graph databases such as Neo4j or Tigergraph, and cannot be a foundation for practical solutions. These currently 
rely on opaque external libraries, such as Neo4j's APOC~\cite{apoc}, or involve complex handcrafted queries~\cite{graphacademy}, as illustrated below.  

\begin{figure}[!ht]
    \centering
    \begin{subfigure}[b]{0.525\textwidth}
        \captionsetup{justification=centering}
        \centering
        \includegraphics[width=7cm]{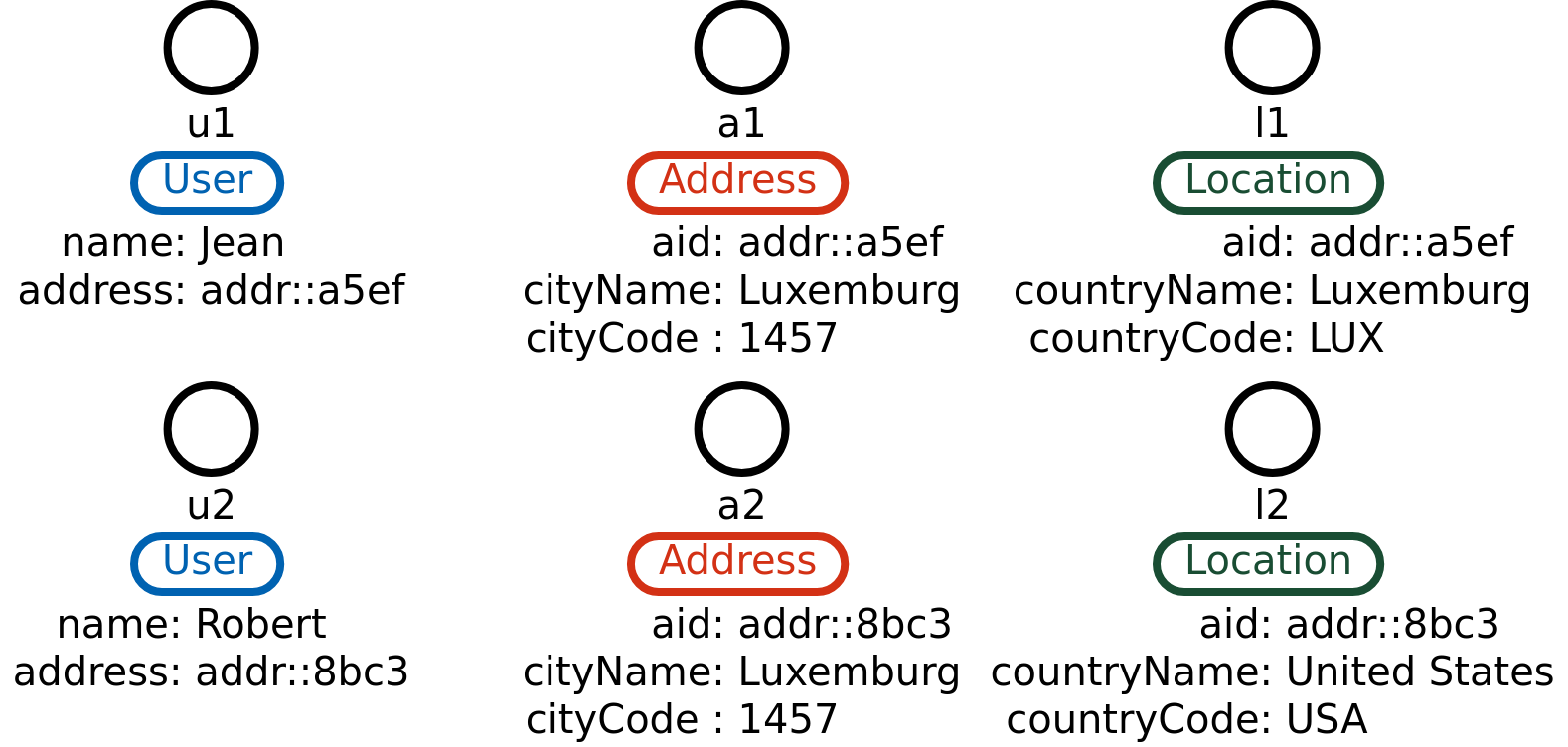}
        \caption{Input property graph $G$ containing ingested relational data.}
        \label{re:input}
    \end{subfigure}
    \begin{subfigure}[b]{0.525\textwidth}
        \bigskip
        \raggedleft
        \captionsetup{justification=centering}
        \begin{minted}[xleftmargin=2em, linenos=true, fontsize=\footnotesize, escapeinside=!!]{cypher}
MATCH (u:User)
MATCH (a:Address) WHERE a.aid = u.address
MATCH (l:Location) WHERE l.aid = u.address
WITH u, collect(a) AS Addresses, collect(l) AS Locations
CREATE (p:Person) !\label{naive:person}!
SET p.name = u.name
WITH p, Addresses, Locations
UNWIND Addresses AS a
MERGE (ci:City {name: a.cityName})             !\label{naive:city}!
SET ci.code = a.cityCode
MERGE (p)-[:HasAddress]->(ci)
WITH p, Locations
UNWIND Locations AS l
MERGE (co:Country {name: l.countryName}) !\label{naive:country}!
SET co.code = l.countryCode
MERGE (p)-[:HasLocation]->(co)
        \end{minted}
        \caption{Ad-hoc transformation script in openCypher.}
        \label{re:cypher}
    \end{subfigure}%

    \begin{subfigure}[b]{0.525\textwidth}
        \bigskip
        \captionsetup{justification=centering}
        \centering
        \includegraphics[width=6.3cm]{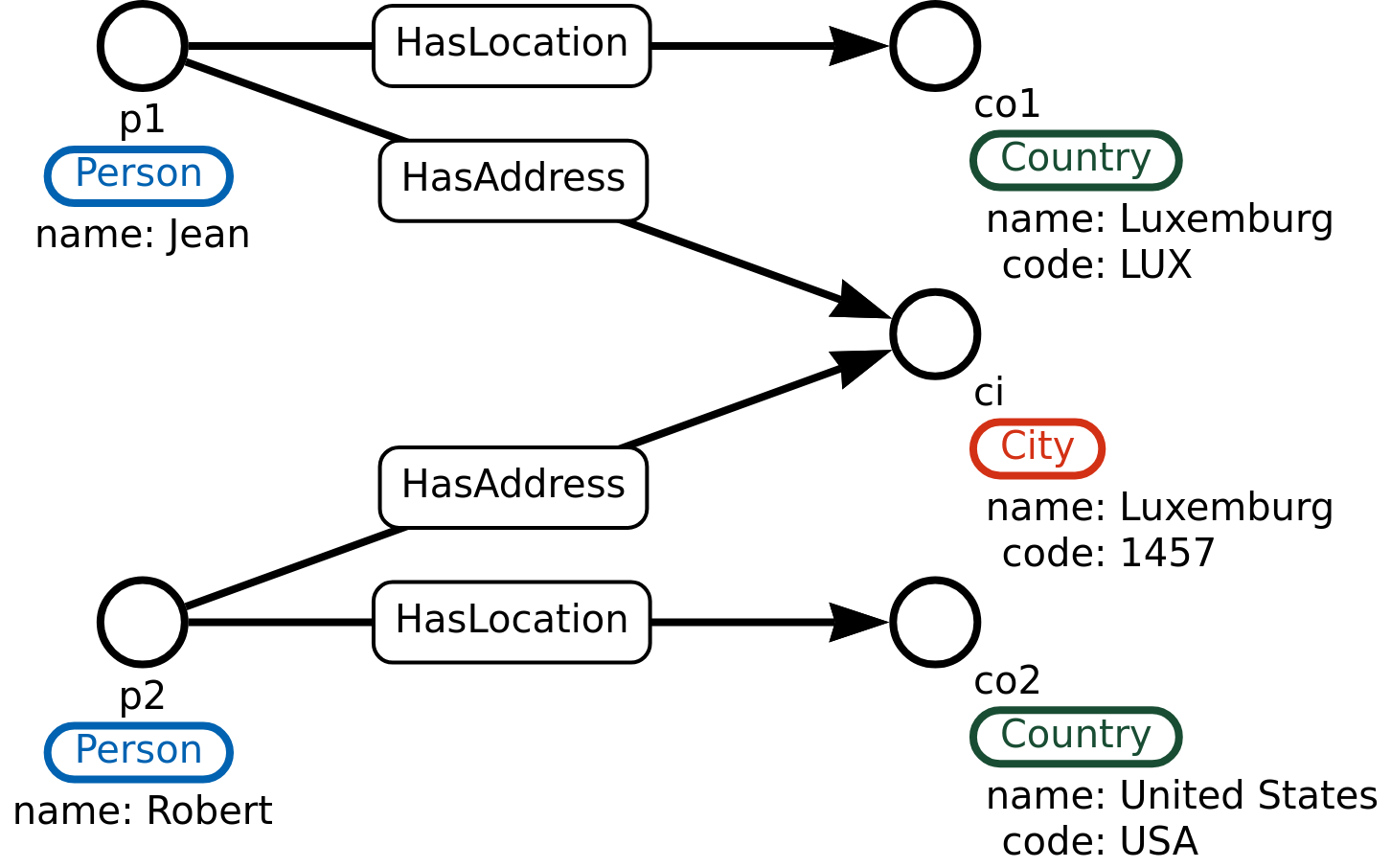}
        \caption{Resulting output property graph $H$.} %\\ \vphantom{text}}
        \label{re:output}
    \end{subfigure}
    \caption{Ad-hoc transformation of raw ingested data.}
    \vspace{-0.5em}
    \label{fig:motivating-scenario}
\end{figure}

\begin{example}
\label{ex:motivating-example}
Figure~\ref{fig:motivating-scenario} illustrates a graph transformation scenario, in which a user has imported relational data into the popular Neo4j graph database and would like to reshape it into a semantically meaningful property graph instance, to  facilitate navigational querying.
The relational data consists of three tables, 
\begin{gather*}
\mathsf{User}(\underline{name}, \underline{address})\,, \; 
\mathsf{Address}(\underline{aid}, cityName,cityCode)\,,\\\mathsf{Location}(\underline{aid}, countryName,countryCode)\,,
\end{gather*}
with primary keys consisting of the underlined attributes, and having two foreign keys: $aid$ references $address$ in $\mathsf{User}$ from both $\mathsf{Address}$ and $\mathsf{Location}$.

Figure~\ref{fig:motivating-scenario} (\subref{re:input}) shows a rudimentary property graph obtained after importing the relational data, using a generic ingestion method, such as  Cypher's \mintinline{cypher}|LOAD CSV| clause.
In the resulting property graph, each node represents a single tuple of the relational instance, with the relation's name represented as the label, and the attributes stored in the node's properties.
Note that there are no edges in this property graph: relationships between places, locations, and users are represented by way of foreign keys, just like in the original relational instance. Needless to say, this is not the best way to represent a relational instance as a property graph. 

\looseness=-1
The user now wants to transform the instance in Figure~\ref{fig:motivating-scenario}~(\subref{re:input}) into one that makes better use of the property graph data model by facilitating navigational operations in queries like ``Which people live in the same city as Jean?''. The user intends to create a node for each person, city, and country, and replace foreign key references with explicit relationships. Figure~\ref{fig:motivating-scenario} (\subref{re:cypher}) shows an implementation of this transformation in openCypher that closely follows a graph refactoring solution described in Neo4j's GraphAcademy~\cite{graphacademy}.
The reader will notice how difficult it is to relate the constructs of this query to the informal specification above. 
Even just making sense of the  \mintinline{cypher}|MERGE| clauses interleaved with implicit grouping and list manipulations (\mintinline{cypher}|UNWIND| and \mintinline{cypher}|collect|) is a daunting task for an unacquainted user. 
But the query leverages other advanced idioms too.
For instance, in Line~\ref{naive:person}, the script creates as many nodes of type \mintinline{cypher}|Person| as there are rows output by the previous \mintinline{cypher}|WITH| clause: one for each $u$, due to implicit grouping.
In line~\ref{naive:city}, the script generates one \mintinline{cypher}|City| node for each \emph{distinct} value found in property \mintinline{cypher}|cityName| across all $a$'s; this is because the property \mintinline{cypher}|name| is specified as \mintinline{cypher}|a.cityName| in the \mintinline{cypher}|MERGE| clause. Similarly, in line~\ref{naive:country}, a single \mintinline{cypher}|Country| node is created for each \emph{distinct} value found in property \mintinline{cypher}|countryName|. 

Figure~\ref{fig:motivating-scenario}~(\subref{re:output}) shows the output property graph obtained by running the script on the input property graph from  Figure~\ref{fig:motivating-scenario}~(\subref{re:input}). 
It reveals that the ad-hoc transformation fails to account for the fact that cities are weak entities that cannot be identified by their name alone, and conflates Luxemburg in Europe with Luxemburg in the US. Detecting such errors is hard because openCypher lacks a transparent mechanism for specifying identities of created elements.
\hfill $\blacktriangleleft$
\end{example}

\looseness=-1
As we have seen, ad-hoc transformation scripts are error-prone and hard to interpret and analyze. 
Moving away from handcrafted implementations to declarative specifications has long been recognized as pivotal for solving data programmability problems~\cite{bernstein_model_2007}. 
The aim of this work is to lay the theoretical foundations for the declarative specification of property graph transformations,
and facilitate practical solutions for turning such specifications into executable scripts in modern property graph query languages.
Constraint-based, fully declarative formalisms, such as  schema mappings for relational~\cite{fagin_data_2005,10.1145/1061318.1061323, bellahsene2011schema} and graph~\cite{10.1145/2448496.2448520,10.1145/3034786.3056113} data, allow multiple target solutions, leading to ambiguous transformations ~\cite{10.5555/1182635.1164136, DBLP:conf/sigmod/BonifatiCCT17}.
For property graphs, this makes the schema mapping problem  undecidable even under strong restrictions \cite{10.1145/3034786.3056113}. We avoid this problem by focusing on transformations that return a unique, well-defined output instance for each input instance, thus facilitating direct execution.

\looseness=-1
We propose a rule-based formalism that allows the user to describe the output property graph based on the input property graph, by specifying not only labels, properties, and relationships between output elements, but also their identities.
The formalism builds upon the \emph{Graph Pattern Calculus} (GPC)~\cite{10.1145/3584372.3588662}, which is an abstraction of the common graph pattern matching features of  GQL and SQL/PGQ~\cite{DBLP:conf/sigmod/DeutschFGHLLLMM22}. GPC is adequate in terms of expressive power: it has ample facilities for querying properties and even on property-less graphs it goes well beyond classical formalisms such as RPQs and CRPQs. It is suitable for theoretical investigation owing to its concise syntax and rigorous semantics. It should also keep our proposal future-proof by ensuring out-of-the-box compatibility with the expected implementations of these standards.
Until then, we can rely on the already implemented graph query languages, such as Neo4j’s openCypher~\cite{francis_cypher_2018} or Oracle’s PGQL~\cite{10.1145/2960414.2960421}, which were a strong inspiration for GQL.
Indeed, the actual query language used in the rules can be seen as a parameter of the framework.

In contrast to the purely topological formalism of~\cite{10.1145/3584372.3588654}, specifications of property-aware transformations can easily become \emph{inconsistent}, when they attempt to specify two different values for the same property of a given element. Detecting such conflicts naturally comes to the foreground of static analysis. As we show, this problem is tightly connected to the \emph{satisfiability} problem for GPC+ (GPC extended with union and projection, also introduced in~\cite{10.1145/3584372.3588662}), which is to  decide if there is a property graph satisfying a given GPC+ query. Exploiting this connection, we establish tight complexity bounds for both these problems, showing that they are  \textsc{PSpace}-complete. To the best of our knowledge, this is the first static-analysis result on GPC. Given that query satisfiability is the work horse of static analysis throughout database theory, we believe that with the adoption of the GQL standard our result will find other uses. An immediate consequence for property graph transformations is that consistency cannot be checked statically due to the prohibitive cost, and conflicts must be handled dynamically, during the execution of the transformation.  

\looseness=-1 
In order to prove that our formalism can serve as a foundation for practical data interoperability solutions, we provide a proof-of-concept implementation. 
As no existing query engine supports GQL yet, we rely on the Neo4j's open-source implementation of \mbox{openCypher}~\cite{francis_cypher_2018, green_updating_2019}, which offers most of the functionalities of GQL described in~\cite{francis_researchers_2023}. 
We study the case when the rules are provided by the users and describe a generic, easily automated method of translating these rules into executable \mbox{openCypher} scripts, and apply it manually to selected realistic property graph transformations derived from real-world data integration scenarios of the iBench benchmark suite~\cite{arocena_ibench_2015}.
We perform a comprehensive experimental study gauging the efficiency of conflict detection and the effect of rule order and various optimizations on several implementation variants. We confirm that our implementation performs well in all scenarios and scales to large input data.
We also demonstrate that our framework can be successfully applied in a concrete data integration scenario on real-world data~\cite{ICIJ-github}, and report the results of a small-scale user study confirming that our framework enhances readability and usability of transformations.

In summary, our main contributions are the following. 
\begin{itemize}
    \item We propose a comprehensive declarative formalism for specifying transformations of property graphs, compatible with SQL/PGQ and the upcoming GQL standard.
    \item We identify consistency as a key static-analysis problem, and show that it is  interreducible
    with satisfiability of GPC+ queries and that both are \textsc{PSpace}-complete.
    \item We provide a proof-of-concept implementation of our formalism in openCypher, and apply it to realistic scenarios of graph-shaped data transformations.
    \item We show experimentally that our solution scales to large input data, handles on-the-fly conflict detection with low overhead, and enhances readability and usability, without sacrificing preformance.
\end{itemize}

The rest of paper is organized as follows. In Section~\ref{sec:preliminaries}, we recall the property graph data model along with GPC. In Section~\ref{sec:pgt}, we 
give syntax and semantics of our graph transformation formalism.
In Section~\ref{sec:conflicts}, we discuss the consistency in relation with satisfiability of GPC+ queries, and establish the complexity bounds. 
In Section~\ref{sec:translation},  we describe our-proof-of-concept implementation. 
In Section~\ref{sec:experiments}, we present both the experiments and the user study.
In Section~\ref{sec:rw} and in Section~\ref{sec:conclusion}, we discuss the related work and conclude the paper.

\section{Preliminaries}
\label{sec:preliminaries}

We briefly introduce the basic concepts of the property graph data model and the \emph{Graph Pattern Calculus} (GPC) that we use in this paper. 
We mostly follow the definitions from~\cite{10.1145/3584372.3588662}.

\begin{toappendix}
    \section{Preliminaries}
    \label{apx:preliminaries}
\end{toappendix}

\subsection{Data model}
\label{subsec:data-model}

\begin{toappendix}
    \subsection{Data model}
    \label{apx:data-model}
    We introduce thereafter further basic notation around the data model of a property graph. 
    
    A \emph{path} $p \coloneqq (n_0, e_1, n_1, \dots, e_k, n_k)$ is an alternating sequence of node and edges, which starts and ends with nodes. 
    Given a path $p$ we denote by $\mathsf{src}(p)$ and $\mathsf{tgt}(p)$ the first and last node of $p$; in this case, $\mathsf{src}(p) = n_0$ and $\mathsf{tgt}(p) = n_k$.
    $\mathsf{len}(p) \in \mathbb{N}$ is the \emph{length} of $p$, i.e, the number of edges in the path; and if $\mathsf{len}(p) = 0$ then the path consists of a single node which is both the source and the target.
    We can define as usual the \emph{concatenation} $p \cdot p'$ of two paths $p$ and $p'$ whenever $\mathsf{tgt}(p) = \mathsf{src}(p')$.
\end{toappendix}

Conforming to the formal specification originating from~\cite{10.1145/3584372.3588662}, a \emph{property graph} $G$ is a tuple
$\langle N, E, \lambda, \mathsf{src}, \mathsf{tgt}, \delta \rangle$
where $\mathcal{O}$, $\mathcal{L}$, $\mathcal{K}$ and $\mathsf{Const}$ are disjoint countable sets of object identifiers (ids), labels, keys (also called property names) 
and constants (data values), and
\begin{itemize}
    \item $N \subset \mathcal{O}$ is the finite set of node ids in $G$;
    \item $E \subset \mathcal{O}$ is the finite set of edge ids;
    \item $N$ and $E$ are disjoint;
    \item $\lambda : N \cup E \to 2^\mathcal{L}$ is a labeling function that associates to every id a (possibly empty) finite set of labels;
    \item $\mathsf{src}, \mathsf{tgt} : E \to N$ define the source and target of each edge;
    \item $\delta : (N \cup E) \times \mathcal{K} \to \mathsf{Const}$ is a finite-domain partial function that associates a constant with an id and a key from $\mathcal{K}$.
\end{itemize}
The node ids and edge ids will be respectively called the \emph{nodes} and \emph{edges} of the property graph.

That is to say, a property graph is a \emph{multigraph} in the sense that two vertices may be connected by more than one edge, even with these edges having the same label(s), and that loops are permitted. 
All the \emph{elements} of the database (the nodes and the edges) store a finite set of property-value pairs, represented by $\delta$.

A property graph is presented in Figure~\ref{fig:motivating-scenario}~(\subref{re:output}).
It contains information about peoples' location such as the $\mathsf{City}$ and the $\mathsf{Country}$ they live in.
We see that it contains one node with label $\mathsf{City}$ and two nodes with label $\mathsf{Country}$; two edges with label $\mathsf{HasLocation}$ and two edges with label $\mathsf{HasAddress}$;
all nodes have property $name$; and all nodes with label $\mathsf{City}$ or $\mathsf{Country}$ have an additional property $code$.
Annotations $\mathsf{p1}, \mathsf{p2}, \dots, \mathsf{co2}$ are node identifiers;
edge identifiers are not shown.

\subsection{Graph Pattern Calculus}
\label{subsec:prelim-gpc}

\begin{toappendix}
    \subsection{Graph Pattern Calculus}
    \label{apx:GPC}

    We make a brief summary on the syntax and semantics of GPC~\cite{10.1145/3584372.3588662}, focusing only on the concepts we need to formally define our property graph transformation rules.

    The \emph{atomic} GPC patterns are node and edge patterns.
    A \emph{node pattern} is of the form $\left(x : \ell \right)$  and an \emph{edge pattern} is of the form $\arr{double}{\mathit{x} : \ell}{}$.
    In both cases $x$ is an optional variable (picked from a countably infinite set $\mathcal{X}$ of variables) which bounds, if present, to the matched element and $\ell$ is an optional label indicating that we want to restrict to $\ell$-elements.
    In an edge pattern $\arr{double}{}{}$ may indicate one of the two possible directions: forward $\arr{->}{}{}$ and  backward $\arr{<-}{}{}$. 
    A GPC \emph{pattern} denoted $\pi$ is inductively constructed on top of the atomic patterns by using arbitrarily many \emph{union} ($\pi + \pi$), 
    \emph{concatenation} ($\pi \cdot \pi$), \emph{conditioning} ($\pi_{\langle \theta \rangle}$), and \emph{repetition} ($\pi^{n..m}$) constructs.

    A GPC \emph{query} is of the form $\rho \, \pi$ 
    with $\rho$ a \emph{restrictor} among the set of $\mathsf{simple}$, $\mathsf{trail}$, $\mathsf{shortest}$,
    which purpose is to ensure a finite result set.
    $\mathsf{simple}$ prevents repetition of nodes along a path,
    $\mathsf{trail}$ prevents repetition of edges and $\mathsf{shortest}$
    selects only the paths of minimal length among all the paths between two nodes.

    The structure of a GPC query can be inspected using a \emph{type system}, a set of typing rules~\cite{10.1145/3584372.3588662}.
    A query is \emph{well-typed} if the rules permit to deduce a unique type to every variable appearing in the query.
    When an expression $Q$ is well-typed, the schema $\mathsf{sch}(Q)$ of this expression associates a type to each variable.

    The \emph{answer} of a GPC query $Q(\bar{x})$ on a property graph $G$, denoted $\llbracket Q \rrbracket_{G}^{\bar{x}}$ is a set of assignments.
    An \emph{assignment} binds the variables $\bar{x}$, present in the query, to values.
    \emph{Values} to be associated to variables are dependent upon the deduced type of the variable for that query.
    Hence, for each type $\tau$, there is a set of values $\mathcal{V}_\tau$.
    Values may be references to elements in the graph, e.g., for $\mathsf{Node}$ and $\mathsf{Edge}$ types.

    All answers to queries we need to define our property graph transformations will have assignments of the variables among the types $\mathsf{Node}$ and $\mathsf{Edge}$.
    In our framework, we will use GPC queries extended with the capability to use conditioning on top of joins. 
    This is not part of the specification in~\cite{10.1145/3584372.3588662}, but this is planned to be in GQL~\cite{francis_researchers_2023}.
\end{toappendix}

In the following, we introduce GPC by means of examples.
The reader can refer to~\cite{10.1145/3584372.3588662} for more details on GPC, and to~\cite{francis_researchers_2023} for insight on how it will actually be used in GQL.

In Example~\ref{ex:motivating-example}, the user can retrieve from the property graph in Figure~\ref{fig:motivating-scenario}~(\subref{re:output}) ``all people who live in the same city as a person named $\$name$'' using the following GPC query:
\begin{equation*}
        \underset{\langle name = \$name \rangle}{ \left(\: : \mathsf{Person}\right) } \: \underset{}{ \arr{->}{ \mathit{} \, : \, \mathsf{HasAddress}}{}  } \:
        \underset{}{ \left(\: : \mathsf{City}\right) } \: \underset{}{ \arr{<-}{ \mathit{} \, : \, \mathsf{HasAddress}}{}  } \:
        \underset{}{ \left(y : \mathsf{Person}\right) }
\end{equation*}
This is an example of a \emph{path pattern}, which is essentially a regular path query~\cite{10.1145/2463664.2465216} augmented with \emph{conditioning}: the filter $\langle name = \$name \rangle$ checks that the value of the property $name$ is indeed the one sought. 
Given a property graph, this pattern returns the nodes that can be matched to $y$.

\looseness=-1
One can also use \emph{graph patterns} in GPC (also called patterns or queries in this paper), which are conjunctions of path patterns. For example, the following query retrieves pairs of people living in the same city, such that one person knows, possibly indirectly, the other one:
\begin{equation*}
    \begin{split}
        &\underset{}{ \left(x : \mathsf{Person}\right) } \: \underset{}{ \arr{->}{ \mathit{} \, : \, \mathsf{HasAddress}}{}  } \:
        \underset{}{ \left(\: : \mathsf{City}\right) } \: \underset{}{ \arr{<-}{ \mathit{} \, : \, \mathsf{HasAddress}}{}  } \:
        \underset{}{ \left(y : \mathsf{Person}\right) }, \\
        &\underset{}{ \left(x\right) }  \: \underset{}{ \arr{->}{ \mathit{} \, : \, \mathsf{Knows}}{}  }^{1..\infty} \: 
        \underset{}{ \left(y\right) } \,.
    \end{split}
\end{equation*}
Such graph patterns generalize conjunctive two-way regular path queries~\cite{10.1145/2463664.2465216} to property graphs.

\looseness=-1
In GPC, each path pattern occurring in a graph pattern must be qualified with a \emph{restrictor} among the set of $\mathsf{simple}$, $\mathsf{trail}$ (used by default if none is given) and $\mathsf{shortest}$.
The restrictor's purpose is to ensure a finite result set: $\mathsf{simple}$ prevents repetition of nodes along a path;
$\mathsf{trail}$ prevents repetition of edges; and $\mathsf{shortest}$ selects only the paths of minimal length among all the paths between two nodes.

For the ease of exposition, we simplify the semantics of GPC. We assume that a pattern only returns a set of bindings
(in~\cite{10.1145/3584372.3588662}, a tuple of witnessing paths is also returned with each binding).  In GPC, variables used in the scope of a repetition operator, such as ${1..\infty}$, are called \emph{group variables} and are bound to lists of nodes or edges. The remaining variables are called \emph{singleton variables} and are bound to single nodes or edges. 
For the purpose of our transformation formalism we restrict the output of queries to singleton variables. 

For a GPC pattern $P$, a tuple $\bar x$ of singleton variables in $P$, and a property graph $G$, we write $\llbracket P \rrbracket ^{\bar x}_G$ for the set of bindings of $\bar x$ returned by $P$ on $G$.
For instance, if $P$ is the first query above and $G$ is the property graph  depicted in Figure~\ref{fig:motivating-scenario}~(\subref{re:output}), we have 
$ \llbracket P \rrbracket^y_G = \{ (y \mapsto \mathsf{p2}) \}$ when $\$name$ is \textit{``Jean''},
$ \llbracket P \rrbracket^y_G = \{ (y \mapsto \mathsf{p1}) \}$ when $\$name$ is \textit{``Robert''}, and $ \llbracket P \rrbracket^y_G = \emptyset$ for any other name. 
(Note that the $\mathsf{trail}$ restrictor has been used by default.)

\section{Property graph transformations}
\label{sec:pgt}

In this section, we present our declarative formalism for specifying property graph transformations. An example is given in Figure~\ref{fig:T}. The specification consists of two rules. Each rule collects data from the input graph with a GPC pattern on the left of $\Longrightarrow$, and specifies elements of the input graph using the expression on the right. This expression resembles a GPC pattern, but it has specifications of the element's property values instead of filters and specifications of element identifiers instead of variables to be matched (new variables will reappear on the right-hand side, in a slightly different role).
In what follows we discuss how new identifiers are generated using \emph{Skolem functions} (Section~\ref{subsec:skolem}) and how identifiers, labels, and properties of output elements are specified using \emph{content constructors} (Section~\ref{subsec:content}). Then, we describe the general form of rules (Section~\ref{subsec:rules}) and  explain their semantics in terms of a procedure that generates an output property graph given an input property graph (Section~\ref{subsec:semantics}). We shall also see if the transformation in Figure~\ref{fig:T} fixes the issues discussed in Example~\ref{ex:motivating-example}.

\begin{toappendix}
    \section{Property graph transformations}
    \label{apx:semantics}
    We provide additional notation and definitions that are used in the main proofs of this paper.
\end{toappendix}

\subsection{Generating output identifiers}
\label{subsec:skolem}

Throughout the paper we assume that all identifiers in input property graphs come from a countable set $\mathcal{S} \subset \mathcal{O}$ of \emph{input identifiers}, and ensure that all identifiers in output property graphs come from a countable set $\mathcal{T} \subset \mathcal{O} \setminus \mathcal{S}$ of \emph{output identifiers}.
Following \cite{10.1145/3584372.3588654}, to generate identifiers in the output graph, we use Skolem functions.
Specifically, we use a fixed injective Skolem function 

$$ f : \bigcup_{k \in \mathbb{N}} \left( \mathcal{O} \cup \mathsf{Const} \cup \mathcal{L} \right)^k \rightarrow \mathcal{T}. $$

In the context of relational schema mappings and data exchange, Skolem functions are used for \emph{value invention}~\cite{10.1145/2463676.2465311}, e.g., to generate artificial primary keys of new tuples in a way that makes is possible to refer to them in foreign keys. The way we use Skolem functions is similar, but not the same, because element identifiers are not data values. Rather, they are the property-graph analogue of object identifiers from the object-oriented data model~\citep{10.1145/290179.290182, 10.5555/645916.671975}. Most of the time they are invisible to the user, and are not expected to carry any information beyond the identity of the element.  
Thus, the specific choice of  function $f$ is truly irrelevant, as long as $f$ is injective. 

\begin{example}
\label{ex:motivation}
In the rules in Figure~\ref{fig:T} the Skolem function is kept implicit, but its arguments are explicitly listed. For example, in the subexpression 
$((u):\mathsf{Person})$, on the right-hand side of both rules, $(u)$ indicates that the identifier of the ouptut node is $f(u)$ where $u$ is (the identifier of) a node selected from the input property graph by the left-hand side GPC pattern, such as $\mathsf{u1}$. Because 
the same nodes $u$ are selected in both rules, the subexpressions $((u):\mathsf{Person})$ in both rules will be referring to the same output nodes. Further, $(\ell.countryName)$ specifies the node identifier as $f(\ell.countryName)$, where $\ell.countryName$ name refers to the value of the property   $countryName$ in a node $\ell$ selected from the input graph, such as \textit{``United States''}, and similarly for $(a.cityName)$. If $\ell.countryName = a.cityName$ for some $\ell$ and $a$, which can happen in our example, $(\ell.countryName)$ and $(a.cityName)$ will indicate the same output node. \hfill $\blacktriangleleft$
\end{example}

\begin{toappendix}
    \subsection{Generating output identifiers}
    
    Given a tuple of $m$ variables $\bar{x} = (x_1, \dots, x_m)$, we define the sets of \emph{value arguments} and \emph{arguments} for $\bar{x}$ as 
        $$ \mathcal{V}_{\bar{x}} \Coloneqq c \mid x_i.a $$
        $$ \mathcal{A}_{\bar{x}} \Coloneqq x_i  \mid c \mid \ell \mid x_i.a $$
    where $x_i \in \bar{x}$, $c \in \mathsf{Const}$, $\ell \in \mathcal{L}$ and $a \in \mathcal{K}$.
    We denote by $\mathcal{A}^k_{\bar{x}}$ the set of all tuples of arguments for $\bar{x}$ of length $k$. 

    For a given property graph $G$, $A = (a_1, \dots, a_k)$ a tuple of arguments for $\bar{x}$ defines a function $\mathcal{O}^m \to \left( \mathcal{O} \cup \mathsf{Const \cup \mathcal{L}} \right)^k$ defined as $(o_1, \dots, o_m) \mapsto (v_1, \dots v_k) $ where, for all $1 \leq i \leq k$:
    \begin{itemize}
        \item $v_i \coloneqq o_j$ if $a_i = x_j$;
        \item $v_i \coloneqq c$ if $a_i = c$;
        \item $v_i \coloneqq \ell$ if $a_i = \ell$;
        \item $v_i \coloneqq c$ if $a_i = x_j.a$ and $\delta_G(o_j, a) = c$.
    \end{itemize}
\end{toappendix}

\subsection{Content constructors}
\label{subsec:content}

A property graph transformation must be able to specify not only the identifiers of output elements, but also their labels and properties. For this purpose, we use content constructors.  
A \emph{content constructor} is an expression of the form:
\begin{align*}
C(\bar{x}) & \coloneqq \{ \\
    \mbox{Id:} & \: (a_1, \dots, a_k) \\
    \mbox{Labels:} & \: L \\
    \mbox{Properties:} & \: \langle k_1 = v_1, \dots, k_n = v_n \rangle \: \}
\end{align*}
where $\bar{x}$ is a tuple of variables, $L$ is a finite set of labels; each $k_i$ is a property name from $\mathcal{K}$; 
each $v_i$ is either a data value $c \in \mathsf{Const}$, or an expression of the form $x.a$ for $x \in \bar{x}$ and $a \in \mathcal{K}$; 
and each $a_i$ is either a constant $c \in \mathsf{Const}$, or a label $\ell \in \mathcal{L}$, or an expression of the form $x.a$ or $x$ for $x \in \bar{x}$ and $a \in \mathcal{K}$. The field $\mbox{Id}$ specifies the identity of the node by listing the arguments to be fed to the Skolem function. The fields $\mbox{Labels}$ and $\mbox{Properties}$ specify labels and properties present in an element. Importantly, they do not forbid additional labels and properties, which will allow the user to split the description of an element across multiple rules, if the user so desires. We write $C.\mbox{Id}$ for the content of the $\mbox{Id}$ field of $C$, and similarly for other fields.
When $\bar{x}$ is clear from the context, we simply write $C$ instead of $C(\bar{x})$.

\begin{example} 
\label{ex:content} 
In the first rule in Figure~\ref{fig:T}, new $\mathsf{Country}$ nodes are described using the following content constructor: 
    \begin{align*}
        C_{\mathrm{t}}(a, u, \ell) & \coloneqq \{ \\
        \mbox{Id:} & \: (\ell.countryName) \\
        \mbox{Labels:} & \: \{ \mathrm{Country} \} \\
        \mbox{Properties:} & \: \langle name = \ell.countryName, \: code = \ell.countryCode \rangle \: \}.
    \end{align*}
It specifies the identities and the values of properties $name$ and $code$ of new $\mathsf{Country}$ nodes in terms of the values of properties $countryName$ and $countryCode$ retrieved from elements to which variable $\ell$ is bound in the input graph. 
Rather then using the abstract syntax introduced above, the rule in Figure~\ref{fig:T} presents $C_{\mathrm{t}}$ in GPC-like syntax~\cite{10.1145/3584372.3588662} as
    \begin{equation*}
        \underset{\langle name = \ell.countryName, \: code = \ell.countryCode \rangle}{\left((\ell.countryName) : \mathsf{Country}\right) } \,.
    \end{equation*}
%\hfill $\blacktriangleleft$
\end{example}

\subsection{Transformations}
\label{subsec:rules}

We describe transformations in terms of property graph transformation rules. Each rule brings together the data retrieved from the input property graph by a GPC pattern and a description of output elements expressed with content constructors.

We recall that the semantics of GPC is defined such that a query returns \emph{tuples}.
Each tuple represents a \emph{binding} of singleton variables in that query to elements of the property graph.

We have two kinds of \emph{property graph transformation rules}: node rules and edge rules.
A \emph{node rule} is an expression of the form:
$$ P(\bar{x}) \implies (C(\bar x)) $$
where $P(\bar{x})$ is a GPC query with singleton variables $\bar{x}$ and $C(\bar{x})$ is a content constructor. 
An \emph{edge rule} is an expression of the form:
$$ P(\bar{x}) \implies (C_{\mathrm{s}}(\bar x)) \: \underset{}{ \arr{->}{ C(\bar x)}{} } \:
(C_{\mathrm{t}}(\bar x))  $$
where $P(\bar{x})$ is a GPC query with singleton variables $\bar{x}$ and $C_{\mathrm{s}}(\bar{x}), C(\bar{x})$ and $C_{\mathrm{t}}(\bar{x})$ are content constructors.
Finally, a \emph{property graph transformation} is a finite set of property graph transformation rules.

\begin{example}
    \label{ex:edge-rule}
    The first edge rule in Figure~\ref{fig:T} is built from the content constructor $C_{\mathrm{t}}$ as defined in Example~\ref{ex:content}, and of the following two content constructors $C_{\mathrm{s}}$ and $C$:
    
    \noindent\begin{minipage}{.5\linewidth}
        \centering
        \begin{align*}
            C_{\mathrm{s}}(a, u, \ell) & \coloneqq \{ \\
            \mbox{Id:} & \: (u) \\
            \mbox{Labels:} & \: \{ \mathrm{Person} \} \\
            \mbox{Properties:} & \: \langle name = u.name \rangle \: \},
        \end{align*}
    \end{minipage}%
    \hfill
    \begin{minipage}{.45\linewidth}
        \centering
        \begin{align*}
            C(a, u, \ell) & \coloneqq \{ \\
            \mbox{Id:} & \: () \\
            \mbox{Labels:} & \: \{ \mathrm{HasLocation} \} \\
            \mbox{Properties:} & \: \langle  \rangle \: \}. \qquad \qquad  \blacktriangleleft
        \end{align*} 
    \end{minipage}% 
\end{example}

The above definition allows specifying multiple labels with a single constructor as well as specifying the labels of a single element using multiple rules.
This feature, illustrated in the following example, is crucial.
Without it, in the presence of type hierarchies, one would need negation in the query language to avoid duplicating output elements. 
In our setting, GPC does not permit negating patterns and it is unlikely for the complexity upper bounds in Section~\ref{sec:static} to hold when this form of negation is added.

\begin{example} 
\label{ex:running-example}
    As discussed in Example~\ref{ex:motivation}, if for some nodes $\ell$ and $a$ selected by the GPC patterns in the rules of Figure~\ref{fig:T}, $\ell.\textit{countryName}$ and $a.\textit{cityName}$ are equal, then the $C_{\mathrm{t}}$ constructors in both rules refer to the same output node. For instance for, $\ell=\mathsf{l1}$ and $a=\mathsf{a1}$ in the input graph in Figure~\ref{fig:motivating-scenario}~(\subref{re:input}), both rules refer to the node $f(``\mathit{Luxemburg}")$ in the output graph in Figure~\ref{fig:outT}. 
    In consequence, this node has two labels, $\mathsf{City}$ and $\mathsf{Country}$. This, quite likely, is not what the user actually wants. We will later see how to fix it by adjusting the rules.
    \hfill $\blacktriangleleft$
\end{example}

Property graphs are \emph{multigraphs} and our rules allow 
specifying multiple edges with the same endpoints by using different
arguments for the Skolem function. We will see an example in 
Section~\ref{sec:translation}.

We refer to the right-hand side expressions in node (resp. edge) rules as
node (resp. edge) constructors. We also allow rules of a more general form, illustrated in
Figure~\ref{fig:refined}, where a comma-separated list of node and edge constructors can be used on the right-hand side.
We also support aliasing, with scope limited to a single rule. For instance, in the rule in Figure~\ref{fig:refined}, we introduce alias $x=(u)$ in the
first edge constructor, and use it in the second edge contructor. Both
these extensions are syntactic sugar. To eliminate aliases, we
simply substitute them with their definitions: in the example, we
replace $x$ in the second edge constructor with $(u)$. Then, we split the rules: for each node or edge constructor on the right-hand
side, we create a separate rule with the same GPC pattern
on the left.

\subsection{Semantics}
\label{subsec:semantics}

\begin{toappendix}
    \subsection{Semantics}
\end{toappendix}

In this section, we describe operationally in Algorithm~\ref{alg:exec} how a transformation given as a set of node and edge rules 
turns an input property graph into an output property graph.
In Section~\ref{sec:translation}, we will see how to implement this efficiently in an existing graph database.

Given a GPC query $ P(\bar{x})$, a content constructor $C(\bar x)$ and a binding $\bar o$ for $P(\bar{x})$ over an input property graph $G$, we define $C.\mbox{Id}(\bar o)$ by 
replacing in $C.\mbox{Id}$ each $x_j$ with $o_j$ and each $x_j.a$ with $\delta_G(o_j, a)$. Similarly, we define $C.\mbox{Properties}(\bar o)$ by replacing in $C.\mbox{Properties}$ each $x_j.a$ with $\delta_G(o_j, a)$.

\begin{algorithm}
\caption{Semantics of a set of transformation rules.}
\label{alg:exec}
\begin{algorithmic}[1]
\Require A property graph $G$ and a set of transformation rules $T$.
\Ensure An \emph{output of the transformation $T$ over $G$}, a property graph $T(G) = \langle N, E, \lambda, \mathsf{src}, \mathsf{tgt}, \delta \rangle$.
\vspace{2pt}
\State{initialize $T(G)$ to the empty property graph }
\vspace{-2pt}
\For{each edge rule $ P(\bar{x}) \implies (C_{\mathrm{s}}(\bar x)) \: \underset{}{ \arr{->}{ C(\bar x)}{} } \:
(C_{\mathrm{t}}(\bar x)) \in T$}
    \vspace{-7pt}
    \State add rules $P(\bar{x}) \implies (C_{\mathrm{s}}(\bar x))$ and  $P(\bar{x}) \implies (C_{\mathrm{t}}(\bar x))$ to $T$ \label{line:split}
\EndFor
\For{each node rule $P(\bar{x}) \implies (C(\bar x))  \in T$} \label{line:loop-node} \label{line:on}
%    \For{each binding $\bar{o}$ of $P(\bar{x})$ over $G$} \label{line:oon}
    \For{each binding $\bar{o} \in \llbracket P\rrbracket ^{\bar{x}}_G$} \label{line:oon}
        \State{$N \leftarrow N\cup \{ o \coloneqq f(C.\mbox{Id}(\bar{o})) \}$} \label{line:id}
        \State{$\lambda(o) \leftarrow \lambda(o) \cup C.\mathrm{Labels}$} \label{line:labels}
        \State{set $\delta(o, k)$ to $c$ if $C.\mathrm{Properties}(\bar o)$ sets property $k$ to $c$} \label{line:record}
    \EndFor
\EndFor
\vspace{-4pt}
\For{each edge rule $ P(\bar{x}) \implies (C_{\mathrm{s}}(\bar x)) \: \underset{}{ \arr{->}{ C(\bar x)}{} } \:
(C_{\mathrm{t}}(\bar x)) \in T$} \label{line:oe}
    \vspace{-7pt}
%    \For{each binding $\bar{o}$ of $P(\bar{x})$ over $G$} \label{line:ooe}
    \For{each binding $\bar{o} \in \llbracket P\rrbracket ^{\bar{x}}_G$} \label{line:ooe}
        \State{$o_{\mathrm{s}} \leftarrow f(C_{\mathrm{s}}.\mbox{Id}(\bar{o})); \ o_{\mathrm{t}} \leftarrow f(C_{\mathrm{t}}.\mbox{Id}(\bar{o}))$} \label{line:Enodes}
        \State{$E \leftarrow E \cup \{ o \coloneqq f(o_{\mathrm{s}},C.\mbox{Id}(\bar{o}),o_{\mathrm{t}}) \}$} \label{line:Eid}
        %\State{set the endpoints: $\mathsf{src}(o) = o_{\mathrm{s}}$ and $\mathsf{tgt}(o) = o_{\mathrm{t}}$} \label{line:Eendpoints}
        \State{$\mathsf{src}(o) \leftarrow o_{\mathrm{s}}; \ \mathsf{tgt}(o) \leftarrow o_{\mathrm{t}}$} \label{line:Eendpoints}
        \State{$\lambda(o) \leftarrow \lambda(o) \cup C.\mathrm{Labels}$} \label{line:Elabels}
        \State{set $\delta(o, k)$ to $c$ if $C.\mathrm{Properties(\bar o)}$ sets property $k$ to $c$} \label{line:Erecord}
    \EndFor
\EndFor
\end{algorithmic}
\end{algorithm}

\begin{figure*}[t]
    \begin{subfigure}[b]{0.95\textwidth}
        \centering\small
        \captionsetup{justification=centering}
        \begin{flalign}
            \underset{\langle u.address = a.aid, \: u.address = \ell.aid \rangle}{ 
                 (u : \mathsf{User}), 
                 (a : \mathsf{Address}), 
                 (\ell : \mathsf{Location})
            } \implies 
            \underset{\langle name = u.name \rangle}{ \left((u) : \mathsf{Person}\right) } & \: \underset{}{ \arr{->}{ \mathit{} \, : \, \mathsf{HasLocation}}{}  } \:
            \underset{\langle name = \ell.countryName, \: code = \ell.countryCode \rangle}{ \left((\ell.countryName) : \mathsf{Country}\right) } \label{rule:country} \\ 
            \underset{\langle u.address = a.aid, \: u.address = \ell.aid \rangle}{ 
                (u : \mathsf{User}), 
                (a : \mathsf{Address}), 
                (\ell : \mathsf{Location})
             } \implies
             \underset{\langle name = u.name \rangle}{ \left((u) : \mathsf{Person}\right) } & \: \underset{}{ \arr{->}{ \mathit{} \, : \, \mathsf{HasAddress}}{}  } \:
            \underset{\langle name = a.cityName, \: code = a.cityCode \rangle}{ \left((a.cityName) : \mathsf{City}\right) } \label{rule:city}
        \end{flalign}        
    \end{subfigure}
    \vspace{-0.2em}
    \caption{Transformation $T$ given as a set of rules.}
    \vspace{-0.2em}
    \label{fig:T}
\end{figure*}
\begin{figure}
    \centering
    \includegraphics[width=6.5cm]{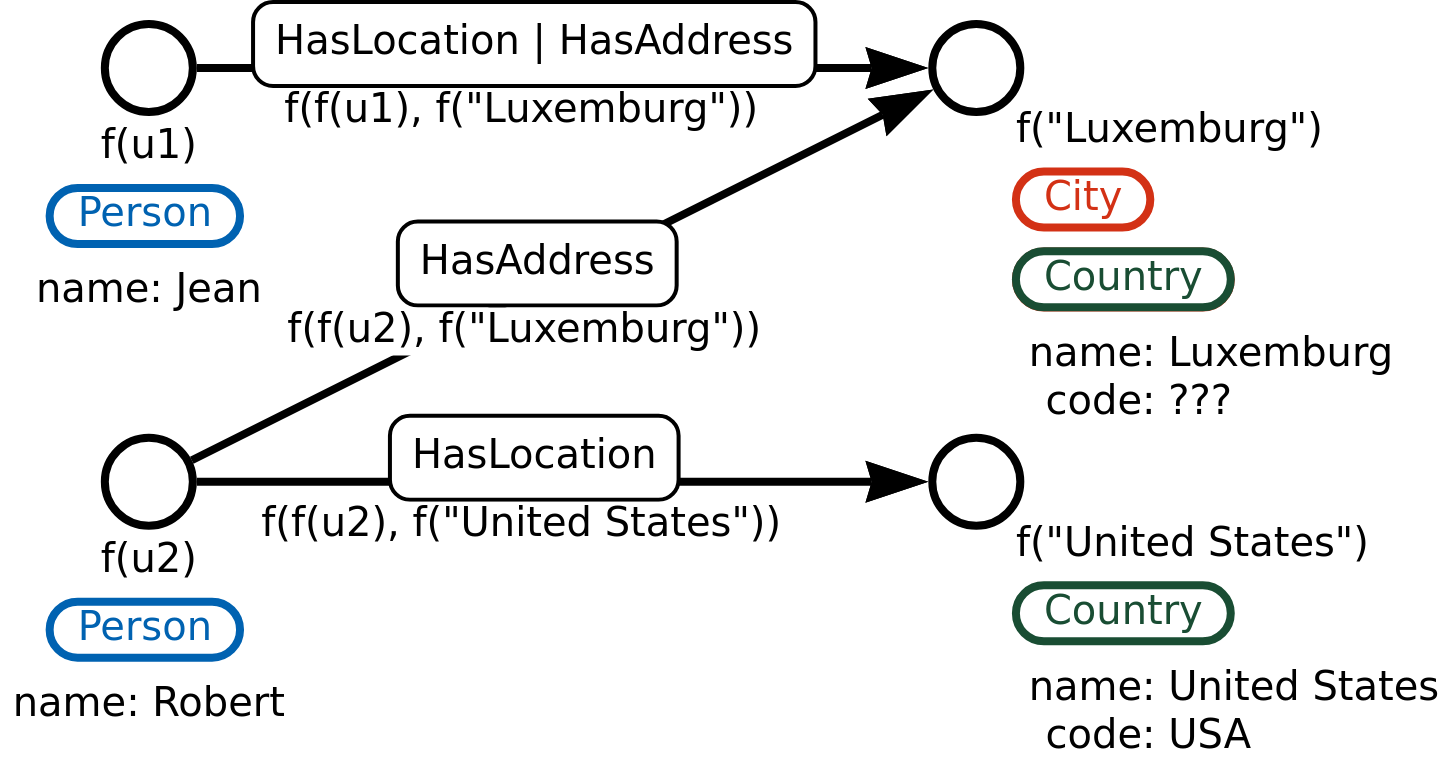}
    \caption{Output property graph $T(G)$.}
    \vspace{-0.5em}
    \label{fig:outT}
\end{figure}

\begin{example}
    \label{ex:alg}
    We describe, step by step, the operations carried out by Algorithm~\ref{alg:exec} on the input consisting of the property graph $G$ from Figure~\ref{fig:motivating-scenario} (\subref{re:input})
    and the transformation $T_1$ which contains only the first of the two rules in Figure~\ref{fig:T}.

    First, the GPC query $$P(u, a, \ell) \coloneqq \underset{\langle u.address = a.aid, \: u.address = \ell.aid \rangle}{ 
                        (u : \mathsf{User}), 
                        (a : \mathsf{Address}), 
                        (\ell : \mathsf{Location})
                    }$$ is executed on $G$ (only once in the entire process) and outputs the set of bindings 
    $\llbracket P \rrbracket^{u, a, \ell}_G = \{ (u \mapsto \mathsf{u1}, a \mapsto \mathsf{a1}, \ell \mapsto \mathsf{l1}), \: (u \mapsto \mathsf{u2}, a \mapsto \mathsf{a2}, \ell \mapsto \mathsf{l2}) \}.$

    In Line~\ref{line:split}, the single edge rule of $T_1$ is split into node rules $P(\bar{x}) \implies (C_{\mathrm{s}}(\bar x))$ and  $P(\bar{x}) \implies (C_{\mathrm{t}}(\bar x))$, 
    where $C_{\mathrm{s}}$ and $C_{\mathrm{t}}$ have been defined in Example~\ref{ex:edge-rule} and ~\ref{ex:content}, respectively.
    These two node rules are added to $T_1$, which initially contains no node rules.

    Suppose that the node rule $P(\bar{x}) \implies (C_{\mathrm{s}}(\bar x))$ is considered first in the loop in Line~\ref{line:loop-node}.
    Two output nodes are created  with respective identifiers $f(u1)$ and $f(u2)$ (Line~\ref{line:id}), one for each binding. Initially, they have no labels, $\lambda(f(u1))=\lambda(f(u2))=\emptyset$, and no properties. Then  both get label $\mathsf{Person}$ (Line~\ref{line:labels}) 
    and their property $name$ is set to \textit{``Jean''} and \textit{``Robert''}, respectively (Line~\ref{line:record}).

    Next, the algorithm moves to the  node rule $P(\bar{x}) \implies (C_{\mathrm{t}}(\bar x))$.
    Two nodes are created in the output with respective identifiers $f(``\mathit{Luxemburg}")$ and $f(``\mathit{United\ States}")$ (Line~\ref{line:id}), one for each binding; 
    they both get label $\mathsf{Country}$ (Line~\ref{line:labels}); 
    and their properties $name$ and $code$ are filled in (Line~\ref{line:record}).

    \tolerance=1000 \looseness=-1
    Finally, the algorithm steps through the only edge rule in $T_1$.
    For the first binding, the nodes corresponding to the endpoints of the edge that has to be created, namely $o_{\mathrm{s}} \coloneqq f(u1)$ and $o_{\mathrm{t}} \coloneqq$ $f(``\mathit{Luxemburg}")$ are retrieved (Line~\ref{line:Enodes}).
    They correspond to the nodes that were created, from this binding, by the node rules that were added to $T_1$ in Line~\ref{line:split}.
    An edge with id $f(f(u1), f(``\mathit{Luxemburg}"))$ is created (Line~\ref{line:Eid}); 
    its source and target are set to $f(u1)$ and $f(``\mathit{Luxemburg}")$, respectively (Line~\ref{line:Eendpoints}); 
    it gets label $\mathsf{HasLocation}$ (Line~\ref{line:Elabels});
    and no property is filled (Line~\ref{line:Erecord}).
    For the second binding, an edge is created by the same process between the nodes $f(u2)$ and $f(``\mathit{United\ States}")$. \hfill $\blacktriangleleft$
\end{example}

The role of Algorithm~\ref{alg:exec} is to give semantics to a set of transformation rules: it explains how the outputs of the multiple rules are consolidated into a single output property graph.
The following result shows that our transformations are indeed graph-to-graph transformations, offering a way to meet the expected requirements of future versions of standard graph query languages~\cite{francis_researchers_2023}.

\begin{propositionrep}
    \label{prop:sanity-check}
    Given an input property graph $G$ and a property graph transformation $T$, 
    Algorithm~\ref{alg:exec} always returns a valid instance of the property graph data model.
\end{propositionrep}

\begin{proof} 
    Given $T$ and $G$ with identifiers from $\mathcal{S}$, let $T(G) \coloneqq \langle N, E, \lambda, \mathsf{src}, \mathsf{tgt}, \delta \rangle$ be a property graph returned by Algorithm~\ref{alg:exec}.
    We have to check that
    (i) both $N$ and $E$ are finite and disjoint, 
    (ii) all elements have a finite number of labels, and
    (iii) every edge has exactly one source and one target.

    (i) The set of bindings resulting from querying a property graph $G$ with the query $P(\bar{x})$, $\llbracket P \rrbracket^{\bar{x}}_G$, is assumed to be finite,
    moreover, we have a finite number of rules in $T$, hence the finiteness of $N \cup E$. 
    A similar reasoning shows the finiteness of the label set for each element in $T(G)$; this is because each rule can mention at most a finite number of labels.

    (ii) We now show that $N \cap E = \emptyset$. Let us assume that $o \in \mathcal{T}$ is both a node and an edge id in $T(G)$ -- respectively resulting from a node rule $R \coloneqq P(\bar{x}) \implies \left(D\right)$ for $\bar{o}$ and 
    an edge rule $S \coloneqq Q(\bar{y}) \implies (C_{\mathrm{s}}) \: \underset{}{ \arr{->}{ C}{}  } \: (C_{\mathrm{t}})$ for $\bar{p}$.
    The \emph{injectivity} of the Skolem function $f$ enforces that $o$ has been generated, in both cases, by using the same arguments. 
    Moreover, by \emph{injectivity} again, we necessarily have $D.\mathsf{Id}(\bar{o}) = (o_{\mathrm{s}}, C.\mathsf{Id}(\bar{p}), o_{\mathrm{t}})$
    for some $o_{\mathrm{s}}, o_{\mathrm{t}} \in N$, with $C.\mbox{Id} \in \mathcal{A}^{k-2}_{\bar{x}}$. 
    (Note that $o_{\mathrm{s}}$ and $o_{\mathrm{t}}$ have been respectively obtained from the source and target rules of $S$ for $\bar{p}.)$
    By definition of the range of the Skolem function $f$, $o_{\mathrm{s}}$ and $o_{\mathrm{t}}$ belong to $\mathcal{T}$; thus, they could not be equal to the first and last values of $D.\mbox{Id}$.
    We conclude that $N \cap E = \emptyset$.

    (iii) Finally, by \emph{injectivity} of $f$, for an $o \in \mathcal{O}$ which is an edge id in $T(G)$, there are, by definition, exactly one $o_{\mathrm{s}} \in N$ and one $o_{\mathrm{t}} \in N$ which correspond to the source and the target of this edge.
\end{proof}

Although Algorithm~\ref{alg:exec} always returns a valid property graph (Proposition~\ref{prop:sanity-check}),
property values may depend on the order in which the rules and bindings are considered in Lines~\ref{line:on}--\ref{line:oon} and~\ref{line:oe}--\ref{line:ooe}.
Hence, the result of the transformation may be ill-defined on some inputs. 
We investigate this further in the next section.

\section{Detecting conflicts}
\label{sec:conflicts}

As one would expect from any expressive property graph transformation language, our formalism supports manipulating properties of output nodes and edges. Compared to purely structural mechanisms, such as~\cite{10.1145/3584372.3588654}, this poses additional challenges. 

\begin{example}
    \label{ex:conflict}
    Let us continue Example~\ref{ex:alg} by now considering the two-rule transformation $T$ presented in Figure~\ref{fig:T}.
    The second rule gets split into two nodes rules, one of which is 
    $$ P(\bar{x}) \implies \underset{\langle name = a.cityName, \: code = a.cityCode \rangle}{ \left((a.cityName) : \mathsf{City}\right) }.$$
    Suppose that this node rule is processed in Line~\ref{line:loop-node} after the two node rules discussed in Example~\ref{ex:alg}. The algorithm attempts twice to create a node with identifier  
    $f(``\mathit{Luxemburg}")$ (Line~\ref{line:id}), once for each binding.
    However, a node with identifier $f(``\mathit{Luxemburg}")$ has been already created by the second node rule in Example~\ref{ex:alg}. In consequence, the label $\mathsf{City}$ is added to this node (Line~\ref{line:labels}) and its properties $name$ and $code$ are set to $Luxemburg$ and  $1457$, respectively (Line~\ref{line:record}), overriding previous values $Luxemburg$ and  $LUX$. This means that one of the two values of property $code$ is lost and it depends on the processing order of rules which one it is. Indeed, the mapping now conflates not only two cities called Luxemburg, as in Example~\ref{ex:motivating-example}, but also the country  Luxemburg. This time, however, the error is easy to spot: looking at the rules we see immediately that the identity of the output nodes depends exclusively on the name of the city/country, which means that all cities and countries with the same name are conflated. We can fix the transformation easily by including information about the corresponding country in the identity of each $\mathsf{City}$ node, for instance by replacing $(a.cityName)$ with $(a.cityName,\ell.countryName)$ in rule (\ref{rule:city}) in Figure~\ref{fig:T}.
    \hfill $\blacktriangleleft$
\end{example}

Detecting the modelling error in the rules in Figure~\ref{fig:T} requires human insight (basic understanding of geography) but we hope to make it easier by insisting on explicit identity specification in transformations. On the other hand, setting an output property to conflicting values is something one can try to capture abstractly and detect automatically. This is what we do next. In the reminder of this section we focus on detecting conflicts statically, by analysing a set of transformation rules to check if it can exhibit this pathological behavior on some input. We come back to handling conflicts dynamically in Section~\ref{sec:translation} and   Section~\ref{sec:experiments}. Due to the limited space, most proofs are moved to  the appendix~\cite{TPG-github}.

\subsection{Consistency}
\label{sec:static}

\begin{toappendix}
    \section{Detecting conflicts}
    \label{apx:static}
    We formally define the notion of \emph{conflicts} and provide the proofs for the results in Section~\ref{sec:conflicts}.
    \subsection{Consistency checking}
    \label{apx:consistency-checking}
    We now formalize the notion of \emph{node conflict}. 
    For any $k \geq 0$, let
    $$  R \coloneqq P(\bar{x}) \implies \underset{\langle \dots, a = v, \dots \rangle}{( A : L ) } \mbox{ and } S \coloneqq Q(\bar{y}) \implies \underset{\langle \dots, a = w, \dots \rangle}{( B : M) } $$
    be two node rules in $T$ with $L, M \in \mathcal{P}_{fin}(\mathcal{L})$.
    These two \emph{node} rules are \emph{potentially conflicting} on the property $a$ when:
    \begin{itemize}
        \item their respective argument lists have same length, i.e., $A = (a_1, \dots, a_k) \in \mathcal{A}^k_{\bar{x}}$ and $B = (b_1, \dots, b_k) \in \mathcal{A}^k_{\bar{y}}$ for a $k \geq 0$;
        \item their argument lists are \emph{compatible}, which means that for each $1 \leq i \leq k$, $a_i$ is a value argument if and only if $b_i$ is a value argument;
        \item if $a_i$ and $b_i$ are respectively $x_i$ and $y_j$, then $\mathsf{sch}(P)(x_i)$ should be equal to $\mathsf{sch}(P)(y_j)$;
        \item they have \emph{potentially conflicting properties}, which means that they both define the same property to (possibly) different values, 
        i.e., $a \in \mathcal{K}, v \in \mathcal{V}_{\bar{x}}$ and $w \in \mathcal{V}_{\bar{y}}$.
    \end{itemize}
    A \emph{node conflict} for a pair of possibly conflicting rules on the property $a$ \emph{occurs} in a property graph $G$ whenever it exists $\bar{o} \in \llbracket P \rrbracket^{\bar{x}}_G$ and $\bar{p} \in \llbracket Q \rrbracket^{\bar{y}}_G$ with $A(\bar{o}) = B(\bar{p})$ and $v(\bar{o}) \neq w(\bar{p})$.

    We now formalize the notion of \emph{edge conflict}. 
    For any $k \geq 0$, let
    $$ R \coloneqq P(\bar{x}) \implies (A_{\mathrm{s}} : \,) \underset{\langle \dots, a = v, \dots \rangle}{\left[ \arr{->}{ A \, : \, L }{} \right]} (A_{\mathrm{t}} : \,)
    \mbox{ and } S \coloneqq Q(\bar{y}) \implies (B_{\mathrm{s}} : \,) \underset{\langle \dots, a = w, \dots \rangle}{\left[ \arr{->}{ B \, : \, M }{} \right]} (B_{\mathrm{t}} : \,) $$
    be two edge rules in $T$ with $L, M \in \mathcal{P}_{fin}(\mathcal{L})$.
    These two \emph{edge} rules are \emph{potentially conflicting} on the property $a$ when:
    \begin{itemize}
        \item their respective argument lists have same length, i.e., $A = (a_1, \dots, a_k) \in \mathcal{A}^k_{\bar{x}}$ and $B = (b_1, \dots, b_k) \in \mathcal{A}^k_{\bar{y}}$ for a $k \geq 0$;
        \item their argument lists are \emph{compatible}, which means that for each $1 \leq i \leq k$, $a_i$ is a value argument if and only if $b_i$ is a value argument;
        \item if $a_i$ and $b_i$ are respectively $x_i$ and $y_j$, then $\mathsf{sch}(P)(x_i)$ should be equal to $\mathsf{sch}(P)(y_j)$;
        \item the three previous points also apply to the pairs $(A_{\mathrm{s}}, A_{\mathrm{t}})$ and $(B_{\mathrm{s}}, B_{\mathrm{t}})$;
        \item they have \emph{potentially conflicting properties}, which means that they both define the same property to (possibly) different values, 
        i.e., $a \in \mathcal{K}, v \in \mathcal{V}_{\bar{x}}$ and $w \in \mathcal{V}_{\bar{y}}$.
    \end{itemize}
    An \emph{edge conflict} for a pair of possibly conflicting rules on the property $a$ \emph{occurs} in a property graph $G$ whenever it exists $\bar{o} \in \llbracket P \rrbracket^{\bar{x}}_G$ and $\bar{p} \in \llbracket Q \rrbracket^{\bar{y}}_G$ with $A(\bar{o}) = B(\bar{p})$, $A_{\mathrm{s}}(\bar{o}) = B_{\mathrm{s}}(\bar{p})$, $A_{\mathrm{t}}(\bar{o}) = B_{\mathrm{t}}(\bar{p})$ and $v(\bar{o}) \neq w(\bar{p})$.
\end{toappendix}

By a \emph{conflict} we mean a situation when Algorithm~\ref{alg:exec} resets a previously set property to a different value, as illustrated in Example~\ref{ex:conflict}. 
A transformation $T$ is
\emph{consistent} if for every input property graph $G$, no execution of Algorithm 1 results in a conflict. Note that even a transformation consisting of a single rule can be inconsistent, because different bindings for the same rule can cause a conflict.  

We study the following fundamental static analysis problem, in the setting where there is no source schema constraining the set of possible input property graphs. 

\begin{description}[leftmargin=2em]
    \item[Consistency.] Given a transformation $T$, check if $T$ is consistent.
\end{description}

As we show next, consistency of transformations is deeply related to  satisfiability of GPC patterns. A GPC pattern is \emph{satisfiable} if it returns a non-empty set of answers on some property graph. Towards the goal of establishing complexity lower bounds for the consistency problem, we provide a polynomial-time reduction from the satisfiability problem for GPC.

\begin{description}[leftmargin=2em]
    \item[Satisfiability.] Given a GPC pattern $P$, check if $P$ is satisfiable. 
\end{description}

\begin{lemma}
    \label{lemma:reduc}
    The satisfiability problem for GPC is \textsc{PTime}-reducible to the transformation consistency problem.
\end{lemma}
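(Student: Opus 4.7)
The plan is to encode satisfiability as the inevitable occurrence of a property-value conflict between two node rules that share the same left-hand side. Given a GPC pattern $P(\bar x)$, I define the transformation $T_P$ to consist of the two node rules
\[
P(\bar x) \implies \underset{\langle a = 0 \rangle}{\left((c) : \mathsf{L}\right)} \qquad \text{and} \qquad P(\bar x) \implies \underset{\langle a = 1 \rangle}{\left((c) : \mathsf{L}\right)},
\]
where $c \in \mathsf{Const}$, $\mathsf{L} \in \mathcal{L}$, $a \in \mathcal{K}$, and $0, 1$ are two distinct constants. Because the identifier tuple $(c)$ is a single constant argument that does not depend on the binding, both rules refer to the same output node $f(c)$ for every $\bar o \in \llbracket P \rrbracket^{\bar x}_G$ on every input $G$.

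Next I would argue the equivalence ``$T_P$ is inconsistent iff $P$ is satisfiable''. For the forward direction, if some graph $G$ yields a non-empty $\llbracket P \rrbracket^{\bar x}_G$, then on any binding $\bar o$ in this set both rules fire inside Algorithm~\ref{alg:exec}, setting $\delta(f(c), a)$ once to $0$ and once to $1$; whichever processing order is used, the second assignment overwrites the first and produces a conflict. Conversely, if $P$ is unsatisfiable then $\llbracket P \rrbracket^{\bar x}_G = \emptyset$ for every $G$, so neither rule ever fires, no property is ever written, and no conflict can arise on any input.

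The construction is a constant-size syntactic wrapping of a single copy of $P$, hence it runs in polynomial (in fact linear) time. Strictly speaking this is a many-one reduction from satisfiability to the complement of consistency, but it is enough for the intended \textsc{PSpace}-hardness application since \textsc{PSpace} is closed under complement; equivalently, a consistency checker is turned into a satisfiability checker by flipping its answer, yielding a polynomial-time Turing reduction between the two problems.

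The only subtle point I foresee, and which I expect to be the main obstacle to a clean write-up, is checking that the two rules formally satisfy the appendix definition of potentially conflicting node rules: their identifier tuples are both the length-one constant $(c)$, so they are compatible in the required sense, and they set the common property $a$ to distinct value arguments $0$ and $1$. Once this is dispatched, everything else follows directly from the operational semantics described by Algorithm~\ref{alg:exec}.
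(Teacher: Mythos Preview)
Your proposal is correct and essentially identical to the paper's proof: the paper also builds $T_P$ from two node rules with left-hand side $P$, a constant identifier tuple $(c)$, a fixed label, and the same property set to two distinct constants (the paper uses $5$ and $7$ rather than $0$ and $1$). Your observation that this is literally a reduction to the complement of consistency, handled via closure of \textsc{PSpace} under complement, is accurate and slightly more explicit than the paper, which simply concludes ``$T_P$ is consistent iff $P$ is not satisfiable'' without further comment.
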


\begin{proof}
For a GPC pattern $P$, let $T_P$ be the transformation consisting of the following two rules
$$ P() \implies \underset{\langle k = 5 \rangle}{( (c) : \ell ) } \quad \mbox{ and } \quad  P() \implies \underset{\langle k = 7 \rangle}{( (c) : \ell )} $$
for some fixed label $\ell$, constant $c$, and property name $k$.
These rules are not conflicting with themselves, because their node constructors do not depend on the binding. However, they are conflicting with each other on a graph $G$ if $P$ returns at least one answer on $G$. Hence, $T_P$ is consistent iff  $P$ is not satisfiable.
\end{proof}

For the converse of Lemma~\ref{lemma:reduc} to hold we need to move to GPC+, a simple extension of GPC with projection and union~\cite{10.1145/3584372.3588662}.
\begin{lemmarep}
    \label{lemma:reduc-converse}
    The transformation consistency problem is \textsc{PTime}-reducible to the satisfiability problem for GPC+.
\end{lemmarep}

\begin{proofsketch}
For a pair of rules $R$ and $S$, and an attribute $a$, we can write a Boolean GPC+ query $Q_{R, S, a}()$ that detects if some matches of $R$ and $S$ lead to different values for attribute $a$ in the same element of the output graph. Because there are polynomially many such triples, we can take the union of all such queries to obtain the final GPC+ query to be checked for satisfiability.
\end{proofsketch}

\begin{proof}
    Recall that a node conflict for a pair of possibly conflicting rules $R$ and $S$ on a property $a$ occurs on a property graph $G$ whenever it exists $\bar{o} \in \llbracket P \rrbracket^{\bar{x}}_G$ and $\bar{p} \in \llbracket Q \rrbracket^{\bar{y}}_G$ with $A(\bar{o}) = B(\bar{p})$ and $v(\bar{o}) \neq w(\bar{p})$.
    We can rewrite all of those conditions in a single boolean GPC query:
    $$Q_{(R,S,a)}() \coloneqq P(\bar{x}), Q(\bar{y}), A = B, v \neq w$$ which is satisfiable on a property graph $H$ \emph{iff} this specific conflict occurs on $H$.
    Note that we assume w.l.o.g. in the following construction that $\bar{x}$ and $\bar{y}$ are disjoint sets of variables.

    Similarly, for an edge conflict, we obtain a single boolean GPC query with the same properties:
    $$Q_{(R,S,a)}() \coloneqq P(\bar{x}), Q(\bar{y}), A = B, A_{\mathrm{s}} = B_{\mathrm{s}}, A_{\mathrm{t}} = B_{\mathrm{t}}, v \neq w$$

    We provide an example of the GPC pattern encoding $v \neq w$. 
    Let assume that $v \coloneqq x_i.b$ and $w \coloneqq y_j.c$ and $\mathsf{sch}(P)(x_i) = \mathsf{Node}$ and $\mathsf{sch}(Q)(y_j) = \mathsf{Edge}$, the following join query encodes $v \neq w$:
    $$ \underset{\langle \neg \left( x_i.b = y_j.c \right) \rangle}{ \left[ (x_i),\left( \right) \arr{->}{ \mathit{y_j}}{} \left(\right)  \right] }. $$
    Similarly, we can apply point-wise this construction and take their join to encode $A = B, A_{\mathrm{s}} = B_{\mathrm{s}}$ and $A_{\mathrm{t}} = B_{\mathrm{t}}$.

    Finally, to wrap-up the proof, it is easy to see that given a property graph transformation $T$, there are at most polynomially many such triplets $(R, S, a)$ satisfying this criteria, so we can take the union of all the $Q_{(R,S,a)}()$ as the final GPC+ query on which to check for satisfiability.
\end{proof}

We now turn to study the complexity of the satisfiability problem for GPC and GPC+. The two lemmas above will allow us to draw conclusions for the  consistency problem in Section~\ref{subsec:backtocons}. 

\subsection{The complexity of satisfiability}
\label{subsec:gpc}

\begin{toappendix}
    \subsection{GPC satisfiability} 
    \label{apx:gpc}
\end{toappendix}

In Theorem~\ref{th:gpc-sat}, we establish that checking if a GPC+ query is satisfiable is a \textsc{PSpace}-complete problem (modulo certain assumptions on the use of restrictors).
We believe that this result is interesting in its own right, beyond the application to transformation consistency we consider this paper.
Indeed, deciding whether a query expressed in a given  query language is satisfiable is a fundamental problem in database theory.
Very little is known to this date about GPC from a theoretical viewpoint, and our work is one of the first to tackle a key static analysis task related to this query language.

\begin{lemmarep}
    \label{lemma:pspace-hard}
    The satisfiability problem for GPC is \textsc{PSpace}-hard.
\end{lemmarep}

\begin{proofsketch}
    We show how to reduce the membership problem for an arbitrary \textsc{PSpace} language to the satisfiability of a GPC query. 
    Let $L$ be a language in \textsc{PSpace} and $M$ a deterministic polynomial-space Turing machine that recognizes $L$ in space $c \cdot p(n)$ for a fixed constant $c$ and polynomial $p$.
    In the following, $n$ denotes the length of the word $w$ which is an input to $M$.

    We construct the following GPC pattern $P$:
    \begin{equation*}
        \label{trips:pspace}
        P() \coloneqq \rho \:
        ( \mathit{x} )_{\langle \theta_1 \rangle }
        \left( \left[ (\mathit{u}) \, \arr{->}{}{} \, (\mathit{v}) \right]_{\langle \theta_2 \rangle} \right)^{1..\infty}
        ( \mathit{y} )_{\langle \theta_3 \rangle } \\
    \end{equation*}
    The intuition is the following. We can represent a configuration of $M$ in a single node, using a polynomial number of properties.
    The pattern $( \mathit{x} )_{\langle \theta_1 \rangle }$ is  responsible for encoding the initial configuration of $M$ over the input word $w$.
    The pattern $ \left[ (\mathit{u}) \, \arr{->}{}{} \, (\mathit{v}) \right]_{\langle \theta_2 \rangle}$ ensures that there exists a valid transition of $M$ between the configurations represented by nodes $u$ and $v$.
    Finally, $( \mathit{y} )_{\langle \theta_3 \rangle }$ specifies that node $y$ represents an accepting~configuration.

    We can use techniques similar to the proof of the Cook-Levin Theorem~\cite{10.5555/574848} to construct in time polynomial in $n$ the formul{\ae} $\theta_1$, $\theta_2$, and $\theta_3$.
    The size of $P$ is then clearly polynomial in $n$.
    This reduction works with any $\rho \in \{ \mathsf{shortest}, \mathsf{simple}, \mathsf{trail} \}$.
\end{proofsketch}

\begin{proof}
    Let $M = (Q, \Sigma, s, F, \delta)$ be the TM that recognizes $L$ in deterministic polynomial-space.
    Let $w$ be an input word of length $n$.
    Assume that $M$ works over $w$ using at most $c \cdot p(n)$ for a fixed constant $c$ and polynomial $p$ tape cells.
    
    We build a GPC query using the following set of properties $\mathcal{K}_0 \subset \mathcal{K}$ which contains all the following elements:
    \begin{itemize}
        \item $pos_{(i, \sigma)}$; the tape contains symbol $\sigma \in \Sigma$ at position $1 \leq i \leq c \cdot p(n)$;
        \item $head_{(i)}$; the head of the TM is in position  $1 \leq i \leq c \cdot p(n)$;
        \item $q$; the TM is in state $q \in Q$.
    \end{itemize}
    Notice that we will be using only two constants values: $0$ and $1$. 
    
    To encode the consistency of a state represented by the set of properties of an element $x$, we will use formula $\theta_c(x)$ defined as the conjunction of the following formulas:
    \begin{itemize}
        \item $\neg \left( x.pos_{(i, a)} = 1 \right) \vee \neg \left( x.pos_{(i, b)} = 1 \right)$;
        for $1 \leq i \leq  c \cdot p(n)$, $a, b \in \Sigma$, $a \neq b$;
        \item $\neg \left( x.head_{(i)} = 1 \right) \vee \neg \left( x.head_{(j)} = 1 \right)$;
        for $1 \leq i \neq j \leq  c \cdot p(n)$;
        \item $\neg \left( x.q = 1 \right) \vee \neg \left( x.q' = 1 \right)$;
        for $q, q' \in Q, q \neq q'$
        \item $\left( x.k = 1 \right) \vee \left(x.k = 0 \right)$; 
        for $k \in \mathcal{K}_0$.
    \end{itemize}

    $\theta_1(x)$ is a conjunction of the following formulas; it ensures that the set of properties of the element pointed to by $x$ encodes the initial configuration of the TM $M$ over $w$:
    \begin{itemize}
        \item $x.pos_{(i, a)} = 1$; 
        if $w_i = a$, for $1 \leq i \leq n$; $w$ is stored at the beginning of the tape;
        \item $x.pos_{(i, \square)} = 1$; 
        for $n < i \leq c \cdot p(n)$, where $\square$ is the blank symbol; 
        the rest of the tape is filled with blank symbols;
        \item $\left( x.head_{(1)} = 1 \right) \wedge \left( x.s = 1 \right)$; initialisation of both head and state;
        \item $\theta_c(x)$; consistency check. 
    \end{itemize}

    $\theta_2(u, v)$ checks that the configuration stored in the record of $v$ can be obtained from $u$ in a single computation step of $M$; it consists in the conjunction of the following formulas:
    \begin{itemize}
        \item $\left( u.head_{(i)} = 0 \wedge u.pos_{(i,a)} = 1 \right) \implies v.pos_{(i,a)} = 1$; 
        for $1 \leq i \leq  c \cdot p(n),$ $a \in \Sigma$; 
        the tape remains unchanged unless written by head;
        \item $\left( u.head_{(i)} = 1 \wedge u.pos_{(i,a)} = 1 \wedge u.q = 1 \right) \implies \left( v.pos_{(i,b)} = 1 \wedge v.head_{(i+d)} = 1 \wedge v.q' = 1\right)$; 
        for $1 \leq i \leq  c \cdot p(n),$ $a, b \in \Sigma$, $q,q' \in Q$, $d \in \{ -1, 0, 1 \}$, when $(q',b,d) \in \delta(q,a)$;
        there is a valid transition;
        \item $\theta_c(u) \wedge \theta_c(v)$; 
        consistency checks.
    \end{itemize}

    Finally, $\theta_3$ checks that we have reached an accepting configuration: $$\theta_3(y) \coloneqq \bigvee\limits_{q \in F} y.q = 1$$

    It is clear that the pattern $P$ -- which is constructed in polynomial time given an entry word $w$ -- is satisfiable if and only if there exists an accepting run for $M$ over $w$ using at most a polynomial amount of space, i.e., \emph{iff} $w \in L \in \textsc{PSpace}$.
\end{proof}

For completeness, we also provide the matching upper bound (under some assumptions) and obtain Theorem~\ref{th:gpc-sat} as a result.
The details of the upper-bound proof are highly technical and the claim depends heavily on the design choices made for GPC. 

\begin{theoremrep}
    \label{th:gpc-sat}
    The satisfiability problem for GPC+ queries using only the $\mathsf{simple}$ and $\mathsf{trail}$ restrictors is \textsc{PSpace}-complete.
\end{theoremrep}

\begin{proof}
    By Lemma~\ref{lemma:pspace-hard}, it is only left to prove the upper-bound.
    Let $Q$ be a GPC query. 
    We assume a mild syntactic restriction that, all edge and node patterns must mention a variable in their descriptor.
    This can be achieved by picking a fresh variable when none is specified in a descriptor.
    
    We note respectively $\mathsf{Const}_0$, $\mathcal{K}_0$ and $\mathcal{L}_0$, the set of constants, keys and labels mentioned in $Q$.
    Additionally, let $d$ be the number of occurences of conditions of the form $x.a = c$ or $x.a = y.b$ in the formula.
    We extend the set $\mathsf{Const}_0$ with $d + 1$ fresh distinct constants.
    Let also $\mathcal{X}$ and $\mathcal{Y}$ be respectively the sets of variables of type $\mathsf{Node}$ and $\mathsf{Edge}$ in the schema of $Q$.
    
    \paragraph{Preliminary remarks.} We start with some key observations:
    \begin{itemize}
        \item It is clear that we cannot always guess an answer made up of a path and an assignment because some patterns are only satisfiable by paths of exponential length:
        e.g., we can simulate a counter to count up to $2^n$ with $n$ properties and a polynomial-sized formula similar to that used in Lemma~\ref{lemma:pspace-hard}; 
        \item The concatenation of patterns implies an implicit equality over the endpoints; thus, we need to book-keep the endpoints of a pattern alongside an assignment for it;
        \item The semantics and the typing rules of GPC isolate the variables under a repetition pattern;
        this means that, in a repetition pattern, we cannot refer to externally defined variables (i.e., non-local occurrences are not permitted);
        moreover, because conditions are only defined over \emph{singleton} variables (i.e., variables of type $\mathsf{Node}$ or of type $\mathsf{Edge}$ in the schema of $Q$), we cannot refer in conditions to variables appearing under a repetition sub-pattern.
        Thus, we can assume that, if a query is satisfiable, an answer path for a pattern inside a repetition pattern can be disjoint (except for its endpoints) from the answer paths for the outside of the repetition pattern.
    \end{itemize}
    This shows that we can store in polynomial space an \emph{extended} assignment corresponding to a valid answer to a path pattern $\pi$ 
    -- by extended, we mean that we additionally store the two endpoints of the answer path in variables named $\mathsf{src}_\pi$ and $\mathsf{tgt}_\pi$ (that we assume to belong to $\mathcal{X}$); 
    additionally, $\mathsf{Group}$ and $\mathsf{Maybe}$ variables will not be tracked because they can no longer be mentioned in conditions.
    In this case, an \emph{assignment} $\eta(\cdot)$ stores for each variable the description of an element which is made of its label set (referred to by $\lambda(\eta(\cdot))$) and its record (referred to by $\delta(\eta(\cdot),\cdot)$); it does not contain an identifier for this element.

    \paragraph{Saturation procedure and consistency check.} In the body of the main algorithm we make use of a procedure called \emph{saturation}. 
    The main idea of this procedure is to propagate equality and inequality constraints between variables whenever new ones are found.
    We illustrate why this procedure is needed and how it works on the following pattern:
    $$ ( \mathit{u} )_{\langle u.a = 1 \rangle }
    \arr{->}{ \mathit{z} }{}
    ( \mathit{v} ) 
    \arr{->}{ \mathit{z} }{}
    ( \mathit{w} )_{\langle w.a = 2 \rangle } $$
    Because both occurrences of $z$ must map to the same edge, the constraints over the endpoints of $z$ enforce equalities between the variables $u$, $v$ and $w$.
    Successively, these equalities imply that $u.a = w.a$, which is in conflict with the requirements of the pattern.
    Note that if we remove the conditions $u.a = 1$ and $w.a = 2$, this pattern remains unsatisfiable w.r.t. the $\mathsf{simple}$ semantics; 
    the forced repetition of nodes in bindings of this pattern was not explicit before applying the saturation procedure because no variable was reused.
    However, reusing the variable $z$ makes the pattern explicitely unsatisfiable under $\mathsf{trail}$ semantics.

    To formalize this, we introduce the notion of an \emph{equality graph} $G$ \emph{for a query} $Q$ (\emph{or a pattern} $\pi$), which is a $2$-layer undirected edge-labeled graph with nodes in each layer that respectively belong to the sets $\mathcal{Y}$ and $\mathcal{X}$.
    There are two edge labels, $=$ and $\neq$.
    Edges can only connect nodes from the same layer.
    If $G$ is an equality graph for a pattern $\pi$, there are two distinguished nodes $\mathsf{src}_\pi, \mathsf{tgt}_\pi \in \mathcal{X}$ which are respectively called the source and the target, and abbreviated $\mathsf{src}, \mathsf{tgt}$ if clear from the context.
    This structure supports the following set of operations:
    \begin{description}
        \item[Saturation:] The first step of the \emph{saturation procedure over a pattern $\pi$} is to equate the endpoints of edge variables.
        Let $e$ be an edge variable in $G$, if $G$ contains say $x$ and $y$, two distinct node variables, which are both at the source or the target of an occurrence of $e$ in $\pi$, then add $x = y$ in $G$.
        
        In a second step, it performs the transitive closure on the $=$-edges of the graph at layer $\mathcal{X}$.
        
        After obtaining a $=$-transitive graph, we pass on the inequalities:
        if there is an $\neq$-edge between say $x$ and $y$, and if both $x' = x$ and $y' = y$ in $G$, then we add $x' \neq y'$ in $G$. 

        Finally, it does a backward step to pull-up the inequalities:
        if there is an $\neq$-edge between two elements in $\mathcal{X}$ and if both are either the source or the target of some edges in $Q$, then it adds an inequality edge between these two edge variables. 
        \item[Check consistency:] 
        Given an assignment for all the variables mentioned in the equality graph, check if all equalities are satisfied in this assignment:
        if $x$ and $y$ are two variables with $x = y$ in $G$, check if their assignments are strictly the same.
        Moreover, check if there is no conflict in the graph: a \emph{conflict} is when there are both an $=$-edge and an $\neq$-edge between the same pair of nodes of $G$; or if there is a $\neq$-loop.
        \item[Merge:] Given two equality graphs $G_1$ and $G_2$, \emph{merging both on nodes} $x$ \emph{and} $y$ \emph{for policy} $\rho$ consists in:
        \begin{enumerate}
            \item taking the union of their vertices and edges;
            \item adding an $=$-edge between $x$ and $y$; if $x$ and $y$ are given as parameters;
            \item if $\rho = \mathsf{simple}$, adding an $\neq$ edge from each node in the layer $\mathcal{X}$ of $G_1$ not $=$-connected to $\mathsf{tgt}$, to each node of the layer $\mathcal{X}$ of $G_2$ not $=$-connected to $\mathsf{src}$;
            (This is consistent with the semantics of the concatenation of $\rho_1 \cdot \rho_2$, where the target or $\rho_1$ must be equal to the source of $\rho_2$.);
            \item if $\rho = \mathsf{trail}$, adding an $\neq$ edge from each node of the layer $\mathcal{Y}$ of $G_1$, to each node of the layer $\mathcal{Y}$ of $G_2$;
            \item applying the saturation procedure.
        \end{enumerate} Note that the result of this operation is also a valid equality graph and that $x, y$ and $\rho$ are optional parameters.
    \end{description}
    The maximum number of nodes and edges in an equality graph for $Q$ is polynomial in $Q$; and the three procedures can be implemented in polynomial time.
    
    \paragraph{Inductive procedure for patterns.} 
    In the following, we describe a non-deterministic polynomial space procedure to check if a GPC pattern query $\rho \pi$ is satisfiable.
    This inductive procedure over the structure of $\pi$ succeeds if and only if $\pi$ is satisfiable under policy $\rho$. 
    It returns a pair consisting of an extended assignment and an equality graph over the variables in the assignment:
    \begin{itemize}
        \item Case $\pi \coloneqq (\mathit{x} : \mathsf{\ell})$.
        Guess a node element with a label set $L \subseteq \mathcal{L}_0$ s.t. $\ell \in L$ and a record consisting in a partial assignment from the keys in $\mathcal{K}_0$ to $\mathsf{Const}_0$.
        Return the pair consisting of the extended assignment binding $\mathsf{src}_\pi$, $\mathsf{tgt}_\pi$ and $x$ to this node; 
        and of the equality graph containing three nodes: $\mathsf{src}_\pi$, $\mathsf{tgt}_\pi$ and $x$, and two edges: $\mathsf{src}_\pi = x$ and $\mathsf{tgt}_\pi = x$.
        \item Case $\pi \coloneqq \arr{->}{ \mathit{y} \, : \, \mathsf{\ell}}{}$.
        Guess an edge element with a label set $L \subseteq \mathcal{L}_0$ s.t. $\ell \in L$ and a record consisting in a partial assignment from the keys in $\mathcal{K}_0$ to $\mathsf{Const}_0$.
        Return one of the following two possibilities:
        \begin{itemize}
            \item Return a pair consisting of the extended assignment binding $y$ to this edge, and $\mathsf{src}_\pi$ and $\mathsf{tgt}_\pi$ to two arbitrarily guessed endpoint nodes (which act as the source and target of $y$); 
            the equality graph contains these three elements and an $\neq$-edge between $\mathsf{src}_\pi$ and $\mathsf{tgt}_\pi$.
            \item (Loop; if $\rho$ is not $\mathsf{simple}$) Return a pair consisting of the extended assignment binding this edge to $y$ and $\mathsf{src}_\pi$ and $\mathsf{tgt}_\pi$ to the same arbitrarily guessed endpoint node;
            the equality graph contains these three elements and an $=$-edge between $\mathsf{src}_\pi$ and $\mathsf{tgt}_\pi$.
        \end{itemize}
        \item Case $\pi \coloneqq \pi_1 + \pi_2$.
        Guess $i \in \{1, 2\}$ and return the extended assignment and equality graph obtained by a recursive call to this procedure on $\pi_i$, after removing the variables that do not appear in $\pi_{3-i}$. 
        (This is because those variables are of type $\mathsf{Maybe}(\cdot)$ in $\pi$.)
        \item Case $\pi \coloneqq \pi_1 \pi_2$.
        Perform a recursive call to this procedure on both $\pi_1$ and $\pi_2$ 
        to obtain an extended assignment and an equality graph for $\pi_i$, $i \in \{1, 2\}$.
        Check whether they unify (i.e., check whether the assignments of $\pi_1$ and $\pi_2$  are strictly the same on their common variables).
        Then, merge the two equality graphs on $\mathsf{tgt}_{\pi_1}$ and $\mathsf{src}_{\pi_2}$ under the policy $\rho$; and check consistency w.r.t. the unified extended assignment.
        Return the pair consisting of the unified extended assignment and the merged equality graph after removing $\mathsf{tgt}_{\pi_1}$ and $\mathsf{src}_{\pi_2}$ and renaming $\mathsf{src}_{\pi_1}$ to $\mathsf{src}_{\pi}$ and $\mathsf{tgt}_{\pi_2}$ to $\mathsf{tgt}_{\pi}$.
        \item Case $\pi \coloneqq {\pi_1}_{\langle \theta \rangle}$.
        Perform a recursive call to this procedure on $\pi_1$ to obtain an extended assignment and an equality graph for $\pi_1$.
        Check the validity of $\theta$ (as defined in the \emph{Semantics of conditioned patterns} in~\cite{10.1145/3584372.3588662}) over the extended assignment of $\pi_1$ and return the same extended assignment and equality graph.
        \item Case $\pi \coloneqq \pi_1^{n .. m}$.
        Guess a length $k$ between $n$ and $m$. 
        (Note that $k$ can be assumed to be at most exponential in the size of the whole query if $m$ is $\infty$ by Lemma~\ref{lemma:inner};
        hence, it can  be written in binary using a polynomial amount of space.)
        Perform $k$ successive recursive calls to this procedure on $\pi_1$.
        Each time drop from the obtained extended assignment and equality set all but the $\mathsf{src}$ and $\mathsf{tgt}$ variables and the equality or inequality edge between them. 
        Check if they concatenate with the previous block obtained so far (i.e., perform the case $\pi \coloneqq \pi_1 \pi_2$).
        Return a pair consisting of an extended assignment binding the $\mathsf{src}$ of the very first guess to $\mathsf{src}_\pi$ and the $\mathsf{tgt}$ of the very last guess to $\mathsf{tgt}_\pi$; 
        and of the equality graph containing only the $\mathsf{src}_\pi$ and $\mathsf{tgt}_\pi$ nodes, possibly with an $=$ or an $\neq$ edge between them if should be.
    \end{itemize}

    \paragraph{Example.} We illustrate how our procedure works for repetition patterns with the following example:
    $$ 
    \pi \coloneqq \mathsf{simple} \left( \left[ (\mathit{u}) \, \arr{->}{}{} \, (\mathit{v}) \right]_{\langle \neg \left( u.a = v.a \right) \rangle} \right)^{0..\infty}
    $$
    We note $\pi_1 \coloneqq \left[ (\mathit{u}) \, \arr{->}{}{} \, (\mathit{v}) \right]_{\langle \neg \left( u.a = v.a \right) \rangle}$. 
    There are three different types of behavior depending on $k$:
    \begin{itemize}
        \item If $k = 0$, it returns a pair consisting of an extended assignment binding $\mathsf{src}_\pi$ and $\mathsf{tgt}_\pi$ to an arbitrarily guessed node element; 
        and of an equality graph containing the edge $\mathsf{src}_\pi = \mathsf{tgt}_\pi$.
        \item If $k = 1$, it returns a pair consisting of an extended assignment binding $\mathsf{src}_\pi$ and $\mathsf{tgt}_\pi$ to two node elements having a different value for $a$ (if both set); 
        and of the quality graph containing $\mathsf{src}_\pi \neq \mathsf{tgt}_\pi$.
        \item If $k \geq 2$, it returns a pair consisting of an extended assignment binding $\mathsf{src}_\pi$ and $\mathsf{tgt}_\pi$ to two arbitrary node elements; 
        and of the quality graph containing $\mathsf{src}_\pi \neq \mathsf{tgt}_\pi$.
    \end{itemize}

    \paragraph{Invariant.} Let $\pi$ be a pattern matched under a restrictor $\rho \in \{ \mathsf{simple}, \mathsf{trail} \}$.
    The pair ($\eta, G_\pi)$ is an output of the procedure consisting of an extended assignment and an equality graph for $\pi$ \emph{iff}
    there exists a property graph $P$ such that $(p, \mu) \in \llbracket\pi \rrbracket_P$\footnote{The set of answers to $\pi$ on $P$~\cite{10.1145/3584372.3588662}.} and if, for all $x$ s.t. $\mathsf{sch}(\pi)(x) \in  \{ \mathsf{Node}, \mathsf{Edge} \}$, we have:
    \begin{itemize}
        \item $\lambda(\mu(x)) = \lambda(\eta(x))$;
        \item for all key (property) $a$, $\delta(\mu(x), a) = \delta(\eta(x), a)$, or both are not defined;
        \item if $x = y$ in $G$ then $\mu(x) = \mu(y)$; similarly, if $x \neq y$ in $G$ then $\mu(x) \neq \mu(y)$.
    \end{itemize}
    In the following, we provide the key ideas for proving this invariant by induction:
    \begin{itemize}
        \item Case $\pi \coloneqq (\mathit{x} : \mathsf{\ell})$ and case $\pi \coloneqq \arr{->}{ \mathit{y} \, : \, \mathsf{\ell}}{}$ are trivial as $P$ can be obtained directly from $\eta$.
        \item Case $\pi \coloneqq \pi_1 + \pi_2$ by applying the hypothesis on either side. 
        Note that $\pi$ may contain fewer singleton variables than $\pi_i$.
        (This is because the variables of $\pi_i$ that do not appear in $\pi_{3-i}$ are of type $\mathsf{Maybe}(\cdot)$ in $\pi$.)
        \item Case $\pi \coloneqq \pi_1 \pi_2$ relies on the merge procedure to propagate the \emph{forced} equalities and inequalities.
        \item Case $\pi \coloneqq {\pi_1}_{\langle \theta \rangle}$ by noticing that we only need to check whether the condition holds over the extended assignment because conditions only apply on singleton variables. 
        (This is enforced by the \emph{Typing rules for the GPC type system} presented in Figure~2 of~\cite{10.1145/3584372.3588662}.)
        Notice that $\neg (u.a = v.a)$ does not lead to $u \neq v$ because either $u.a$ or $v.a$ may be undefined; thus, we don't need to update $G$.
        (Again, this is because of the \emph{Semantics of conditioned patterns} of~\cite{10.1145/3584372.3588662} where $\mu \models (x.a = x.b) $ \emph{iff} $\delta(\mu(x), a)$ and $\delta(\eta(y), b)$ are defined and equal.)
        \item Case $\pi \coloneqq \pi_1^{n .. m}$ because $\pi$ does not contain any singleton variables. 
        Hence, only a potential equality or inequality between its endpoints is tracked throughout the $k$ iteration steps over $\pi_1$.
    \end{itemize}

    \paragraph{Extension to queries (with join and conditioning).} Let $Q \coloneqq Q_1, Q_2$ be a join query with $Q_1$ and $Q_2$ matched under restrictors $\mathsf{simple}$ or $\mathsf{trail}$; and let the pairs $(\eta_i, G_{Q_i}), i \in \{ 1, 2\}$ be returned by the previous procedure on each $Q_i$.
    The procedure (for $Q$) checks if the $\eta_i$'s unify (i.e., check whether the $\eta_i$'s agree on their common variables), and if the result of merging the $G_i$'s remains consistent.

    Note that this procedure supports \emph{conditioning over join queries} (i.e., we add the following query expression $Q \coloneqq Q_{\langle \theta \rangle}$) by checking if the condition is valid over $\eta$, similar to what is done for conditioning in patterns.

    \paragraph{Extension to GPC+.} Simply guess a GPC query among all the disjuncts, and check its satisfiability using the previous procedure.

    \paragraph{Shortest restrictor.} Note that the invariant is not valid if the $\mathsf{shortest}$ restrictor is used.
    For instance, consider the following pattern, which is not supported by the above procedure:
    \begin{equation*} 
    \begin{split}
    &\mathsf{shortest} \, \mathsf{simple} \, (\mathit{x}) \left[ (\,) \, \arr{->}{}{}  \underset{\langle a = 1 \rangle}{(\mathit{u})}  \arr{->}{}{} \, (\,) \, \mbox{\LARGE +} \,
    (\,) \, \arr{->}{}{} \, (\,) \, \arr{->}{}{} \, (\mathit{u}) \, \arr{->}{}{} \, (\,) \, \arr{->}{}{} \, (\,) \right] (\mathit{y}), \\
    &\mathsf{simple} \, (x) \, \arr{->}{}{}  \underset{\langle a = 1 \rangle}{(\mathit{w})}  \arr{->}{}{} \, (y), \\
    &\mathsf{simple} \, \underset{\langle a = 2 \rangle}{(\mathit{u})}
    \end{split}
    \end{equation*}
    Nevertheless, we can easily prove that the above procedure works when all patterns in a GPC query $Q$ that are matched under the $\mathsf{shortest}$ restrictor have only their endpoints for singleton variables.
\end{proof}

\begin{toappendix}
    \begin{lemma}
        \label{lemma:inner}
        Let $\pi \coloneqq \pi_1^{\infty}$ be a sub-pattern of a GPC query $\rho\pi_0$.
        If there is an answer path for $\pi$, then there is another answer path for $\pi$ consisting of at most $k$ repetitions of $\pi_1$, with $k$ exponential in the size of $\pi_0$, and with all answer paths for $\pi_1$ being inner disjoints.
    \end{lemma}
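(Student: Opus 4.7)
The plan is a two-stage pumping argument on answer paths for $\pi = \pi_1^{\infty}$. Take an answer path $p = q_1 q_2 \cdots q_N$ for $\pi$, where each $q_i$ is an answer path for $\pi_1$ with endpoints $n_{i-1}, n_i$. Pick such a $p$ with the minimum total length, and among those with the minimum $N$. I would first argue, by minimality, that the $q_i$'s are pairwise inner-disjoint: if some $q_i$ and $q_j$ ($i<j$) shared an inner node $v$ (or, under the $\mathsf{trail}$ restrictor, an inner edge), then splicing at $v$ — replacing $q_i \cdots q_j$ by the concatenation of the prefix of $q_i$ up to $v$ with the suffix of $q_j$ from $v$ onward — would produce a strictly shorter answer path of $\pi$, contradicting the choice of $p$. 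A short case analysis on how $v$ sits inside $q_i$ and $q_j$ ensures that the resulting block is still a valid answer path for $\pi_1^{\infty}$ under the chosen restrictor.

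Next, for the length bound, I associate with each joining node $n_i$ a \emph{signature} $\sigma(n_i)$ consisting of its subset of labels drawn from $\mathcal{L}_0$ together with its assignment of values in $\mathsf{Const}_0$ to keys in $\mathcal{K}_0$ (using a distinguished symbol for undefined). The number of possible signatures is bounded by $2^{|\mathcal{L}_0|} \cdot (|\mathsf{Const}_0|+1)^{|\mathcal{K}_0|}$, which is exponential in the size of $\pi_0$; set $k$ to this bound. If $N>k$, then by the pigeonhole principle there exist $0 \le i < j \le N$ with $\sigma(n_i) = \sigma(n_j)$. Form a new answer path $q_1 \cdots q_i q_{j+1}' \cdots q_N'$ in the graph obtained by identifying $n_j$ with $n_i$ (and discarding the nodes and edges used exclusively inside $q_{i+1} \cdots q_j$). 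Iterating this contraction drives $N$ down to at most $k$ while preserving inner-disjointness.

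The main obstacle is verifying that each contraction yields a \emph{valid} answer path in the modified graph. Two points have to be checked. First, every $q_k'$ must still be an answer path for $\pi_1$; this follows because $\pi_1$ is a sub-pattern of $\pi_0$ and therefore only observes labels in $\mathcal{L}_0$, keys in $\mathcal{K}_0$, and constants in $\mathsf{Const}_0$, so two endpoints with identical signatures are indistinguishable by $\pi_1$. Second, the global $\mathsf{simple}$ or $\mathsf{trail}$ restrictor for $p'$ must still hold; this is exactly where the inner-disjointness established in the first stage is essential, because it guarantees that identifying $n_j$ with $n_i$ does not introduce a forbidden repetition inside any surviving $q_k'$, nor between surviving blocks. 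The argument breaks for the $\mathsf{shortest}$ restrictor (spliced paths need not remain shortest), which is consistent with the scope of Theorem~\ref{th:gpc-sat}.
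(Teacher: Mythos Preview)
Your first stage contains a genuine gap. The splicing step---replacing $q_i \cdots q_j$ by the prefix of $q_i$ up to $v$ concatenated with the suffix of $q_j$ from $v$---does not in general produce an answer path for $\pi_1^{\infty}$. For the result to be valid, the spliced block must itself decompose into consecutive matches of $\pi_1$, but a prefix of one $\pi_1$-match glued at an arbitrary inner node to a suffix of another need not match $\pi_1$ (or any power of it). For a concrete counterexample under $\mathsf{trail}$, let $\pi_1$ be the path pattern requiring two $a$-edges followed by one $b$-edge, and take two consecutive blocks $q_1, q_2$ in which the node after the first $a$-edge of $q_1$ coincides with the node after the second $a$-edge of $q_2$; this is a valid trail, yet splicing at that shared node yields a two-edge block with label sequence $a,b$, which matches no $\pi_1^k$. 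Your ``short case analysis'' cannot rule such situations out. Relatedly, under $\mathsf{trail}$ only repeated \emph{edges} are forbidden, so shared inner \emph{nodes} between blocks can occur even in a length-minimal answer path, and your procedure does not eliminate them.

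The paper obtains inner-disjointness by a different and much simpler route: since the lemma lives in a \emph{satisfiability} setting, one may modify the witnessing graph, and the paper simply replaces the inner part of each $q_i$ by a fresh disjoint copy (keeping the join nodes identified). Your second stage---the pigeonhole on the label/record signature of the join nodes $n_i$, together with contracting two nodes of equal signature---is essentially the paper's argument for the exponential bound on $k$, and that part is sound. The fix is to replace your first stage by the disjoint-copies construction; once that is done, the order of the two stages is immaterial.
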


    \begin{proof}
        There are at most exponentially many distinct records for nodes with values in $\mathsf{Const}_0$, keys in $\mathcal{K}_0$ and labels in $\mathcal{L}_0$.
        Given the \emph{Semantics of repeated patterns} in~\cite{10.1145/3584372.3588662}, only the target node of an iteration has an impact over the next iteration by being its source; only this information is transferred across successive repetitions of $\pi_1$.
        Thus, we can reduce the number of repetitions of $\pi_1$ in the initial answer path because one target node necessarily gets repeated. 
        We can moreover assume w.l.o.g. that all answer paths to $\pi$ are inner disjoints, by taking disjoint copies of the initial answer paths.
    \end{proof}
\end{toappendix}

We prove the upper-bound of Theorem~\ref{th:gpc-sat} by inductively constructing an equality type over all variables in the query.
This non-deterministic procedure uses only a polynomial amount of space by avoiding storing the full match of the pattern.
Unfortunately, this does not extend to queries using the \textsf{shortest} restrictor: they seem to require storing the full match.
We leave open the question of pinpointing the exact complexity of satisfiability for such queries. 

Given the high complexity lower bounds, one might wonder whether there are useful subclasses of GPC with tractable satisfiability. 
In Lemma~\ref{lemma:GPC-sat-single} below, we show that even under strong limitations, satisfiability is still intractable.

\begin{lemmarep}
    \label{lemma:GPC-sat-single}
    The satisfiability problem is \textsc{NP}-hard even for single-node GPC patterns.
\end{lemmarep}

\begin{proof}
    We reduce 3-SAT to the satisfiability of a GPC query. Let $F$ be the following 3-SAT formula over $n$ variables and $m$ clauses:
    $$\bigwedge\limits_{1 \leq i \leq m}\left( l_{i1} \vee l_{i2} \vee l_{i3}\right)$$ 
    where for all $1\leq i\leq m$ and $j \in \{1,2,3\}$, $l_{ij}$ is a \emph{literal} which is either $x_k$ or $\bar{x}_k$ for a $k \in \{1, \dots, n\}$.

    We construct the following GPC query $Q(x) \coloneqq \rho \: \pi_{\langle \theta \rangle}$ with  $\pi\coloneqq (x : \ell)$ and
    \begin{align*} 
        \theta  \coloneqq &\bigwedge\limits_{1 \leq i \leq n} \left( x_i = 1 \wedge \bar{x}_i = 0 \right) \vee ( x_i = 0 \wedge \bar{x}_i = 1 ) \\
                \wedge &\bigwedge\limits_{1 \leq j \leq m} \bigvee\limits_{ (b_1, b_2, b_3) \in C^3 } \left( l_{j1} = b_1 \wedge l_{j2} = b_2 \wedge l_{j3} = b_3 \right)
    \end{align*}
    where $C^3 = \{ (0,0, 1), \dots, (1, 1, 1) \}$; $x$ and $\ell$ are optional in $\pi$; 
    the literals of $F$ are used as property  names in $Q$; and size of $Q$ is clearly polynomial in the size of $F$.

    We now show that $F$ is satisfiable if and only if there exists a property graph $G$ on which $Q(x)$ returns at least a node.
    
    \noindent ($\Rightarrow$) Assume that $F$ is satisfied by an assignment $\nu$ to $\bar{x}$. We construct a property graph containing a unique $\ell$-labeled node with identifier $o$, having the following record:
    $$\forall i \in \{1, \dots, n\} \left\{\begin{array}{lr}
        \begin{array}{l@{}} x_i \mapsto 1, \\ \bar{x}_i \mapsto 0 \end{array}         & \text{if } \nu_i = \top \\
        \begin{array}{l@{}} x_i \mapsto 0, \\ \bar{x}_i \mapsto 1 \end{array}         & \text{if } \nu_i = \bot 
        \end{array}\right\}. $$ % x_i \mapsto 1; \bar{x}_i \mapsto 0, if \nu_i = \top$$
    By design, the top-most conjunct of $\theta$ is satisfied. Let $C_i$ for any $1 \leq i \leq m$ be a clause in $F$; by hypothesis $c_i$ is satisfied by $\nu$, so there exists $(b_1, b_2, b_3) \in C^3$ such that $\delta(o, l_{ij}) = b_j$ for all $1\leq j \leq 3$. 
    
    \noindent ($\Leftarrow$) Conversely, if $Q(x)$ is satisfied in a property graph $G$ for an element $o$; we have that $o \in N$ and each $x_i, \bar{x}_i$ are defined in the record of $o$.
    The restrictions enforced by the top-most conjunct of $\theta$ ensure that we can retrieve a well-defined assignment for $F$; the last conjunct ensures that this is a valid assignment for $F$.
\end{proof}

\subsection{Back to consistency}
\label{subsec:backtocons}

From Theorem~\ref{th:gpc-sat},  Lemma~\ref{lemma:reduc}, and Lemma~\ref{lemma:reduc-converse}, we obtain the following fundamental result.

\begin{corollary}
    \label{corol:main-result}
    The consistency problem is \textsc{PSpace}-complete for transformations using only $\mathsf{simple}$ and $\mathsf{trail}$ restrictors. 
\end{corollary}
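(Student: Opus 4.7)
The plan is to derive the corollary by combining the three earlier results in a careful way, keeping track of two things: which direction of reduction is being used, and whether the restrictors stay within the allowed set $\{\mathsf{simple}, \mathsf{trail}\}$ through each reduction.

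For the lower bound, I would start from Lemma~\ref{lemma:pspace-hard}, which provides \textsc{PSpace}-hardness of GPC satisfiability, and observe from its proof sketch that the construction already works for any fixed restrictor, in particular for $\mathsf{simple}$ or $\mathsf{trail}$. Feeding such an instance into the reduction of Lemma~\ref{lemma:reduc}, which maps a pattern $P$ to a transformation $T_P$ that is consistent iff $P$ is unsatisfiable, I would get \textsc{PSpace}-hardness for the complement of consistency. The transformation produced in that reduction simply embeds $P$ on the left-hand side of two rules with otherwise trivial constructors, so the restrictor set of $T_P$ is exactly the one used in $P$. Since $\textsc{PSpace} = \mathsf{co}\text{-}\textsc{PSpace}$, this hardness transfers to the consistency problem itself, restricted to $\mathsf{simple}/\mathsf{trail}$ transformations.

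For the upper bound, I would invoke Lemma~\ref{lemma:reduc-converse}, which constructs in polynomial time a Boolean GPC+ query $\bigvee_{R,S,a} Q_{(R,S,a)}$ that is satisfiable iff some execution of the transformation can exhibit a conflict, i.e.\ iff the transformation is \emph{inconsistent}. Inspecting the construction, each $Q_{(R,S,a)}$ is assembled from the left-hand side patterns $P$ and $Q$ of the rules together with joins and a conditioning encoding $v \neq w$; no new restrictor is introduced. Therefore, if the input transformation only uses $\mathsf{simple}$ and $\mathsf{trail}$, so does the resulting GPC+ query. Applying Theorem~\ref{th:gpc-sat} puts the satisfiability test in \textsc{PSpace}, and closure of \textsc{PSpace} under complement yields \textsc{PSpace} membership for consistency.

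The main subtlety — rather than a real obstacle — is the bookkeeping around complementation and restrictor preservation: one must notice that Lemma~\ref{lemma:reduc} reduces satisfiability to \emph{inconsistency} (not directly to consistency), and that Lemma~\ref{lemma:reduc-converse} similarly goes through the complement. Both invocations of $\textsc{PSpace} = \mathsf{co}\text{-}\textsc{PSpace}$ are needed. Once this is spelled out, the corollary follows immediately, and no separate algorithmic work is required beyond what is already developed in Section~\ref{subsec:gpc}.
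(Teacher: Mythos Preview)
Your proposal is correct and follows exactly the route the paper takes: the corollary is stated as an immediate consequence of Theorem~\ref{th:gpc-sat}, Lemma~\ref{lemma:reduc}, and Lemma~\ref{lemma:reduc-converse}, and your write-up simply makes explicit the complementation steps and the restrictor-preservation checks that the paper leaves implicit.
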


In fact, the \textsc{PSpace} lower bound holds already for transformations using only two rules and any single restrictor. From Lemma~\ref{lemma:GPC-sat-single} and Lemma~\ref{lemma:reduc} it  follows that the problem remains intractable even for transformations using very restricted GPC queries.

In the light of these high complexity lower bounds, it is unlikely that conflict detection can be handled statically in practice. This means that conflicts have to be handled dynamically, when the transformation is executed. In Section~\ref{sec:translation} we discuss how this can be implemented in practice and in Section~\ref{sec:experiments} we show experimentally that the incurred overhead is affordable.

\section{Translation to Cypher}
\label{sec:translation}

Algorithm~\ref{alg:exec} can be seen as an abstraction of a transformation engine: it takes a transformation and an input property graph, and produces an output property graph. In this section we show how to compile a transformation to an openCypher script that can be directly executed in any openCypher engine. This is similar in spirit to executable SQL scripts for relational schema mappings, scalable and efficient in producing target solutions~\cite{bernstein_model_2007}.

We first discuss the overall complexity of Algorithm~\ref{alg:exec}. 
Lines~\ref{line:id} and~\ref{line:Eid} involve a set-theoretic union and, without appropriate optimization, their cost is proportional to the current number of elements in $T(G)$ in each iteration of the loop.
Lines~\ref{line:labels}--\ref{line:record} and ~\ref{line:Eendpoints}--\ref{line:Erecord} can be implemented in $\mathcal{O}(1)$ provided that Lines~\ref{line:id} and~\ref{line:Eid} return a pointer to the element $o \coloneqq f(C.\mbox{Id}(\bar{o})) \in T(G)$.
Thus the overall complexity of Algorithm~\ref{alg:exec} on input $G$ is:
\begin{equation} \label{eq1} 
    \mathcal{O} \left( t_{int} + n_c \cdot Int(G,T) \cdot |T(G)| \right)
\end{equation}
where $n_c$ is the total number of content constructors in $T$, $Int(G,T)$ and $t_{int}$ are respectively the total size of all intermediate results $\llbracket P \rrbracket_{G}^{\bar{x}}$ and the overall running time for computing $\llbracket P \rrbracket_{G}^{\bar{x}}$, with $P(\bar x)$ ranging over all left-hand sides of rules in $T$.

Thus, the total time taken by Algorithm~\ref{alg:exec} implemented naively is quadratic in the size of the property graphs, which makes it practically unusable for large input instances.
However, the complexity heavily depends on the implementation of the set-theoretic unions.

\begin{figure*}[t]
    \centering\small
    \begin{subfigure}[b]{.95\textwidth}
    \begin{align*}
        \underset{\langle u.address = a.aid, \: u.address = \ell.aid \rangle}{ 
            (u : \mathsf{User}), 
            (a : \mathsf{Address}), 
            (\ell : \mathsf{Location})
        } \implies 
        \underset{\langle name = u.name \rangle}{ \left(x = (u) : \mathsf{Person}\right) } & \: \underset{}{ \arr{->}{ \mathit{(\mathsf{HasLocation})} \, : \, \mathsf{HasLocation}}{} } \:
        \underset{\langle name = \ell.countryName, \: code = \ell.countryCode \rangle}{ \left((\ell.countryName) : \mathsf{Country}\right) }, \\ 
            (x) & \: \underset{}{ \arr{->}{ \mathit{(\mathsf{HasAddress})} \, : \, \mathsf{HasAddress}}{} } \:
        \underset{\langle name = a.cityName, \: code = a.cityCode \rangle}{ \left((a.cityName) : \mathsf{City}\right) }
    \end{align*}
    \end{subfigure}
    \vspace{-0.5em}
    \caption{Refined property graph transformation $T_{\mathrm{r}}$.}
    \vspace{-0.7em}
    \label{fig:refined}
\end{figure*}

\begin{comment}
\begin{listing}[t]
\begin{minted}[xleftmargin=2em, linenos=true, fontsize=\small]{cypher}
MATCH (u:User)
MATCH (a:Address) WHERE a.aid = u.address
MATCH (l:Location) WHERE l.aid = u.address
WITH u, collect(a) AS Addresses, collect(l) AS Locations
CREATE (p:Person)
SET p.name = u.name
WITH p, Addresses, Locations
UNWIND Addresses AS a
WITH p, a, Locations
MERGE (ci {name: a.cityName})
SET ci:City,
    ci.code = a.cityCode
MERGE (p)-[:HasAddress]->(ci)
WITH p, Locations
UNWIND Locations AS l
WITH p, l
MERGE (co {name: l.countryName})
SET co:Country,
    co.code = l.countryCode
MERGE (p)-[:HasLocation]->(co)
\end{minted}
\caption{Cypher's way of implementing $T_{\mathrm{r}}$ (Figure~\ref{fig:refined}).}
\label{listing:separate-indexes}
\end{listing}
\end{comment}

\looseness=-1
\emph{Plain implementation.} In Figure~\ref{listing:synthesized} we showcase the result of our translation strategy for the variant $T_{\mathrm{r}}$ of $T$, presented in Figure~\ref{fig:refined}.
This transformation has only one rule and is translated into a single executable script.
For transformations with several rules, each rule of the transformation is independently translated into a script.

Cypher's built-in \mintinline{cypher}|elementId| primitive provides access to the identifier of an element, which is unique among all elements in the database. 
It plays a crucial role in our implementation as we actively use these identifiers as arguments to the Skolem function generating output identifiers.
To the best of our knowledge, there is no explicit control of the creation of new identifiers in Neo4j, 
so we equip nodes and edges in the output graph with a special property \verb|_|\mintinline{cypher}|id| that plays the role of controllable element identifier.

Lines~\ref{c:debutLeft}--\ref{c:finLeft} correspond to the left part of the rule and are responsible for retrieving the necessary information from the input property graph.
Recall that, in Line~\ref{line:split} of Algorithm~\ref{alg:exec}, a node rule is added  for each endpoint of every edge constructor in the transformation.
Accordingly, in the openCypher script, each node constructor used on the right-hand side of the rule is considered separately (Lines~\ref{c:debutN}--\ref{c:finN}).
Similarly to how Skolem functions are usually implemented in relational data exchange for schema mapping tasks~\cite{bernstein_model_2007}, 
we implement them with string operations, e.g., 
\verb|_|\mintinline{cypher}|id: "(" |\verb|+|\mintinline{cypher}| elementId(u) |\verb|+|\mintinline{cypher}| ")".|
\noindent 
We rely on the semantics of Cypher's \mintinline{cypher}|MERGE| clause, described in~\cite{green_updating_2019}, to implement the set-theoretic union:
in Lines~\ref{c:debutN},~\ref{c:mergeCo}, and~\ref{c:mergeCi}, \mintinline{cypher}|MERGE| checks whether an element with this identifier already exists in the graph;
either one exists and is retrieved, or a new element is created.
Adding the corresponding label(s) to the retrieved node (Line~\ref{line:labels} of Algorithm~\ref{alg:exec}) is implemented with the native Cypher's \mintinline{cypher}|SET| clause in Lines~\ref{c:labelP},~\ref{c:labelCo}, and~\ref{c:labelCi}. 
Similarly, the properties of the nodes (Line~\ref{line:record} of Algorithm~\ref{alg:exec}) are set in Lines~\ref{c:recP},~\ref{c:recCo}, and~\ref{c:finN}. 

Finally, the relationships are created (Lines~\ref{c:debutE}--\ref{c:finE}).
To keep the value of \verb|_|\mintinline{cypher}|id| unique among all elements in the output, and given the restriction that relationships hold a single label in Neo4j, the edge labels have been provided as arguments to the Skolem functions in Figure~\ref{fig:refined}.
Note that, when we merge an edge pattern, we are sure that the endpoints already exist in the database.

\begin{figure}[!ht]
    \centering
\begin{minted}[xleftmargin=2em, linenos=true, fontsize=\footnotesize, escapeinside=!!]{cypher}
MATCH (u:User)                                      !\label{c:debutLeft}!
MATCH (a:Address) WHERE a.aid = u.address
MATCH (l:Location) WHERE l.aid = u.address          !\label{c:finLeft}!
MERGE (x:_dummy { _id: "(" + elementId(u) + ")" })  !\label{c:debutN}!
SET x:Person,                                       !\label{c:labelP}!
    x.name = u.name                                 !\label{c:recP}!
MERGE (y:_dummy { _id: "(" + l.countryName + ")" }) !\label{c:mergeCo}!
SET y:Country,                                      !\label{c:labelCo}!
    y.name = l.countryName, y.code = l.countryCode  !\label{c:recCo}!
MERGE (z:_dummy { _id: "(" + a.cityName + ")" })    !\label{c:mergeCi}! 
SET z:City,                                         !\label{c:labelCi}!
    z.name = a.cityName, z.code = a.cityCode        !\label{c:finN}!
MERGE (x)-[hl:HasLocation {                         !\label{c:debutE}!
    _id: "(" + elementId(x) + "," + "HasLocation" + ","
             + elementId(y) + ")" }]->(y)
MERGE (x)-[ha:HasAddress {                          !\label{c:mergeHA}!
    _id: "(" + elementId(x) + "," + "HasAddress" + ","
             + elementId(z) + ")" }]->(z)           !\label{c:finE}!
\end{minted}
\vspace{-0.5em}
\caption{openCypher script corresponding to $T_{\mathrm{r}}$ (Figure~\ref{fig:refined}).}
\label{listing:synthesized}
\end{figure}

We point out that the \verb|_|\mintinline{cypher}|id| property and the \verb|_|\mintinline{cypher}|dummy| label are internal data; 
they are of no interest to the end user and can be dropped after the transformation  with Cypher's \mintinline{cypher}|REMOVE|~\mbox{command}.

\emph{Optimizations.} Optimizing the \mintinline{cypher}|MERGE| clauses in Lines~\ref{c:debutN},~\ref{c:mergeCo},~\ref{c:mergeCi},~\ref{c:debutE}, and~\ref{c:mergeHA} which implement the set-theoretic unions is crucial in reducing the overall execution time of the transformation.

As is the case in most database management systems, Neo4j provides facilities for query optimization.
The two that are relevant in this context are indexes and uniqueness constraints.
An \emph{index} permits to retrieve efficiently nodes with a given label that have a specific value at a given property. 
When we know in advance that all these values are unique, we can make further use of \emph{uniqueness constraints} (UCs).
Note that in our implementation, we maintain the invariant that each \verb|_|\mintinline{cypher}|id| is unique across all elements in the output.

In the version of Neo4j Community Edition that we use for running the experiments, indexes are implemented using b-trees, which means that the cost of testing if an index with $n$ elements contains a given key is $\mathcal{O}(\log n)$.
That is, by using indexes we can improve the worst-case complexity of Algorithm~\ref{alg:exec} to:
\begin{equation} \label{eq2} 
    \mathcal{O} \left( t_{int} + n_c \cdot Int(G,T) \cdot \log |T(G)| \right)
\end{equation}
In the next section we comprehensively evaluate the advantages and disadvantages of using indexes and uniqueness constraints on nodes and relationships, defined on the label/property pair \verb|_|\mintinline{cypher}|dummy|/\verb|_|\mintinline{cypher}|id|.

\emph{Conflict detection.} The consistency problem is unfortunately \textsc{PSpace}-complete by Corollary~\ref{corol:main-result}, so we cannot efficiently
check the declarative specification at compile time.
Instead, we need to be ready for potential inconsistencies at run time. 

Figure~\ref{listing:conflict-detection} illustrates how one can detect conflicts on the property \mintinline{cypher}|code| when creating a new \mintinline{cypher}|City| node.
We use the \mintinline{cypher}|ON MATCH| subclause of the \mintinline{cypher}|MERGE| clause to perform a comparison when we set a property for an existing node.
Notice that a different rule could have led to the creation of this node and, consequently, \mintinline{cypher}|z.code| may be empty; 
in this case the operator \mintinline{cypher}|<>| returns \mintinline{cypher}|false| and the correct specification is reached.

\begin{figure}[!ht]
    \centering
    \begin{minted}[xleftmargin=2em, linenos=true, fontsize=\footnotesize, escapeinside=!!]{cypher}
MERGE (z:_dummy { _id: "(" + a.cityName + ")" })
ON CREATE SET z:City, z.code = a.cityCode
ON MATCH SET z:City, z.code = CASE WHEN z.code <> a.cityCode
             THEN "Conflict detected!" !\textcolor{CypherGreen}{\textbf{ELSE}}! a.cityCode END
\end{minted}
\vspace{-0.5em}
\caption{Detecting conflicts on the property \mintinline{cypher}|code|.}
\label{listing:conflict-detection}
\end{figure}

\section{Experiments}
\label{sec:experiments}

Our experimental study has three main objectives: (i) evaluate the benefits of using this formalism for transforming property graphs in practical use-cases over a large amount of data, 
(ii) evaluate the involved overhead of detecting potential inconsistencies at run-time, and 
(iii) compare with the native openCypher approach such as the one presented in Figure~\ref{fig:motivating-scenario} (\subref{re:cypher}).

\emph{Experimental setting.} We have implemented our property graph transformations in openCypher 9 using a local Neo4j Community Edition instance in version 5.9.0.
For monitoring the results and performing the database management tasks required in our methodology, we have used Python 3.11 and the official Neo4j Python Driver 5.9.0.
The source code, datasets, and configuration files are available on the public \textsc{GitHub} repository of the project.
We performed the experiments on an HP EliteBook 840 G3 with an Intel Core i7-6600U CPU and 32GiB of system memory (2133 MHz).

\emph{Datasets.} Due to the lack of benchmarks for property graph transformations, in order to build realistic scenarios  we have adapted the mappings from several relational data integration scenarios from the iBench  suite~\cite{arocena_ibench_2015}.
In particular, we encode relational input instances as property graphs by creating a node for each tuple (no edges), and we let the target instances be property graphs as well, thus simulating graph-to-graph transformations. 
Each mapping in a scenario corresponds to a rule of our formalism. 
Following the method described in Section~\ref{sec:translation}, we compute an openCypher script implementing each rule.

The middle part of Table~\ref{tab:scenarios} reports the number $|\mathcal{L}_{in}|$ of input labels in each scenario (corresponding to the number of different relations in the original iBench scenario), the number $|\mathcal{L}^{node}_{out}|$ of output node labels, the number $|\mathcal{L}^{edge}_{out}|$ of output edge labels, and the number $|\mathcal{K}|$ of properties.
The right part provides information about the number of rules in the scenario $|T|$ and the total number $n_c$ of content constructors.
In each scenario, for each of the $|\mathcal{L}_{in}|$ input node labels, we generated up to $10^5$ nodes.

\begin{table}[t]
    \caption{Scenarios characteristics.}
    \vspace{-1em}
    \centering\footnotesize
    \begin{tabular}{l|cccc|cc}
        \toprule
        \multicolumn{1}{l}{}       & \multicolumn{4}{c}{Labels / Properties}                       & \multicolumn{2}{c}{Rules}  \\
        Scenario                   & $|\mathcal{L}_{in}|$ & $|\mathcal{L}^{node}_{out}|$ 
                                   & $|\mathcal{L}^{edge}_{out}|$ & $|\mathcal{K}|$                & $|T|$ & $n_c$              \\
        \midrule
        PersonAddress              & 2 & 2 & 1 & 7                                                 & 2 & 6                      \\
        FlightHotel                & 2 & 3 & 2 & 5                                                 & 1 & 7                      \\
        PersonData                 & 3 & 3 & 2 & 3                                                 & 1 & 5                      \\
        GUSToBIOSQL                & 7 & 5 & 4 & 80                                                & 8 & 18                     \\
        DBLPToAmalgam1             & 7 & 5 & 4 & 140                                               & 10 & 22                    \\
        Amalgam1ToAmalgam3         & 15 & 2 & 1 & 128                                              & 8 & 22                     \\
    \end{tabular}
    \vspace{-0.5em}
    \label{tab:scenarios}
\end{table}

\begin{figure}[!ht]
    \includegraphics[width=8.5cm]{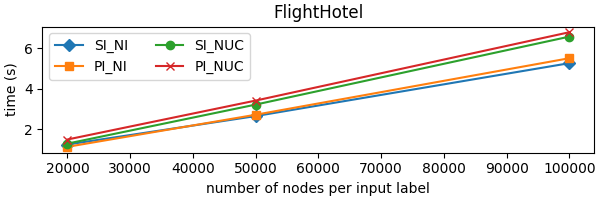}
    \caption{Comparison between uniqueness constraints and indexes for computing $T(G)$.}
    \label{fig:results-uc}
\end{figure}

\begin{figure}[!ht]
    \includegraphics[width=8cm]{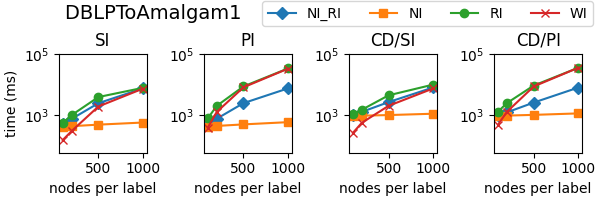}
    \vspace{-0.5em}
    \caption{Impact of indexing strategies and implementation variants on the computation of $T(G)$.}
    \label{fig:results-index}
\end{figure}

\begin{figure}[!ht]
     \centering
    \includegraphics[width=6cm]{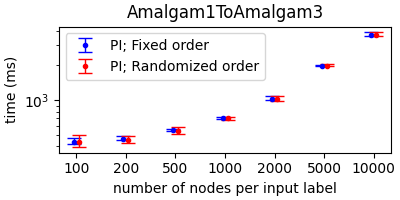}
    \vspace{-0.5em}
    \caption{Average computation time for different orders of execution of the rules.}
    \label{fig:results-random}
\end{figure}

\begin{figure}[!ht]
    \includegraphics[width=8cm]{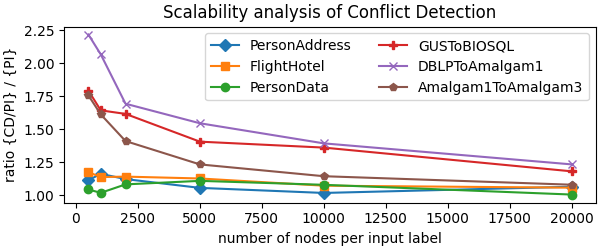}
    \vspace{-0.5em}
    \caption{Ratio between the time for computing $T(G)$ with and without conflict detection  (using PI\_NI).}
    \label{fig:results-CD}
    \vspace{-0.5em}
\end{figure}

\emph{Methodology.} The main abstraction in our implementation is a \emph{Scenario} which describes an input property graph database that contains some data of a given size stored in specific node and relationship properties.
As previously shown in Figure~\ref{fig:motivating-scenario} (\subref{re:input}), given the iBench output, we create a node for each tuple, having as properties (key/value pairs) the columns names and column values. 
We also add the Cypher specification of a set of indexes and constraints on the output side, that are created before executing the transformation when the output data is still empty. This step is not time consuming and takes on average less than one millisecond per index.

\looseness=-1
A scenario includes several Cypher queries---one for each transformation rule---that are successively applied.
To simulate the process of transforming one graph into another, and to distinguish between input and output data, we have used disjoint sets of labels in the input and output instance. 
Thus, a single database instance holds both input and output data at a time, but contains initially no output data.
As a final step, a scenario is responsible for flushing the database and removing the indexes and constraints in order to have a fresh database instance before executing the next scenario.
Note that the query cache (execution plans) is cleared when one of them is dropped.
We monitored the total amount of time spend by Neo4j in applying the transformation rules.
Each experiment generally represents the average taken over $5$ runs of a scenario.

\emph{Alternative implementation using separate indexes.} 
In Section~\ref{sec:translation}, we discussed an implementation of the framework, the \emph{Plain implementation} (PI), which uses a single index on the output side to speed up the retrieval of already existing nodes by Cypher's \mintinline{cypher}|MERGE| clause.
Using a single index for all nodes in the output may severely impact the performance of the implementation as the cost of index maintenance may become prohibitive.
To quantify this, we compare with an alternative implementation, the \emph{Separate indexes implementation} (SI), where the label is part of the argument list, similar to the case of relationships.
The goal here is to mitigate the cost of maintaining a very large index by splitting the data into many smaller ones. 
Note that it is still possible to detect conflicts in this variant with a slight modification of the code from Figure~\ref{listing:conflict-detection}.

\emph{Impact of indexes and uniqueness constraints.} 
We start by comparing the advantages of using uniqueness constraints on nodes (NUC) and indexes (NI) on the two alternative implementations, NI and SI.
Figure~\ref{fig:results-uc} reports the results for our $\mathsf{FlightHotel}$ scenario, showing that
for large input data, indexes tend to outperform UCs.

We next investigate the impact of using combinations of indexes on nodes and relationships.
We compared variants with \emph{indexes on nodes and relationships} (NI\_RI), \emph{indexes on nodes only} (NI), \emph{indexes on relationships only} (RI), and \emph{without indexes} (WI)
for the previous PI and SI implementations and their respective variants with conflict detection enabled: \emph{Conflict Detection over Plain implementation} (CD/PI), \emph{Conflict Detection over Separate indexes} (CD/SI).
We showcase in Figure~\ref{fig:results-index}, on a logarithmic scale, the results that were obtained for the %$\mathsf{PersonAddress}$ and 
$\mathsf{DBLPToAmalgam1}$ scenario. %s.
Other scenarios show similar trends and they are reported in the appendix~\cite{TPG-github}.
It is clear from the figure that the choice of indexes to use is crucial.
Using indexes only on nodes is more efficient than using a combination of indexes on nodes and relationships, which is in turn more efficient than using indexes only on relationships or using no index at all.
The key reason of this behavior is that indexes on nodes already allow accessing the endpoints of edges, along with the edges themselves, efficiently. Additional indexes on edges do not help, but do incur additional overhead.

The positive point that emerges from this study is that the implementation does not require fine tuning to be efficient in a specific scenario; using indexes only on nodes is consistently the best approach to use.
Additionally, when using indexes only on nodes (NI), the Plain implementation (PI) is negligibly slower than Separate indexes implementation (SI), whereas for other combinations of indexes it is noticeably slower.
We discussed in Example~\ref{ex:running-example} that PI allows for more flexible use of labels compared to the SI (which corresponds to having a dedicated Skolem function for each set of labels).
In view of the above results, in the remaining experiments we focus on the Plain implementation with node indexes (PI\_NI).%, to reduce the number of subsequent experiments.

\looseness=-1
\emph{Impact of rule order.} Our formalism is declarative and does not specify the order for the execution of the rules.
Hence, we have investigated the impact of different orders on the computation time of the transformation.
We compare the minimum, average and maximum running times using random orders with the (fixed) order provided in iBench as baseline.
Figure~\ref{fig:results-random} reports the results for the $\mathsf{DBLPToAmalgam1}$ scenario; error bars indicate minimum and
maximum computation times observed over $20$ independent runs. 
For space reasons, $\mathsf{GUSToBIOSQL}$ and $\mathsf{Amalgam1ToAmalgam3}$, exhibiting similar results, are deferred to the appendix~\cite{TPG-github}.

We can observe that the impact of the order in which the rules are applied on the execution time of the transformation is not substantial; randomized orders have a variance similar to that of a fixed order.
It is fair to say that the performance of our implementation does not rely on any specific execution order.

\emph{Overhead of detecting potential inconsistencies.} We evaluated the impact of turning on conflict detection (over PI\_NI) by investigating the ratio between computation time with and without conflict detection.
The theoretical complexity of our implementation of Algorithm~\ref{alg:exec} with conflict detection is:
\begin{equation} \label{eq3} 
    \mathcal{O} \left( t_{int} + n_c \cdot Int(G,T) \cdot \left( \log |T(G)| + c \right) \right)
\end{equation}
for $c$ a constant modeling the cost of the conditional statement. Thus the overhead incurred by detecting conflicts is $1 + \frac{c}{\log |T(G)|}$, which tends to $1$ in larger scenarios.

The results presented in Figure~\ref{fig:results-CD} experimentally validate that the incurred overhead of conflict detection is reasonably low for large input instances, and stays within a constant factor, roughly between $1$ and $1.3$, depending on the scenario.

\begin{figure}[!ht]
    \includegraphics[width=8.5cm]{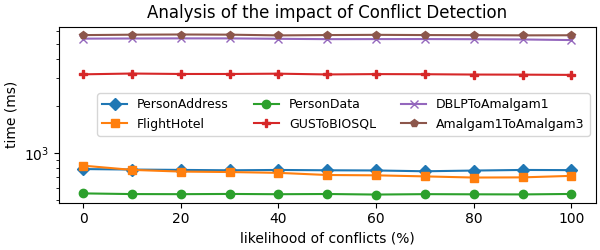}
    \vspace{-1.5em}
    \caption{Run-time comparison for different likelihood of conflicts (using CD/PI\_NI with $10^5$ nodes for each input type).}
    \vspace{-0.5em}
    \label{fig:results-ER}
\end{figure}

\emph{Robustness against incidence of conflicts.} iBench's scenarios have very few or no conflicts.
To investigate the generalizability of these results to more conflict-prone scenarios, we designed an experiment using an additional randomization step: when a rule attempts to set a value for an attribute, the value is changed randomly. 
This allows us to control the average number of conflicts in the output.
Figure~\ref{fig:results-ER} reports, on a logarithmic scale, the results for all our scenarios, with varying likelihood of conflicts, ranging from $0\%$ to $100\%$.
Note that, the size of the output is preserved because only the attributes are affected, not the topology of the graph.

We observe that the prevalence of conflicts has no impact on the execution time, suggesting that our framework's stability is preserved, even with a large proportion of conflicts in the output.

\begin{figure}[!ht]
    \vspace{-0.5em}
    \begin{subfigure}[t]{4.2cm}
        \centering
        \includegraphics[width=4.2cm]{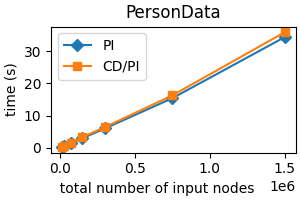}
    \end{subfigure}
    \hfill
    \begin{subfigure}[t]{4.2cm}
        \centering
        \includegraphics[width=4.2cm]{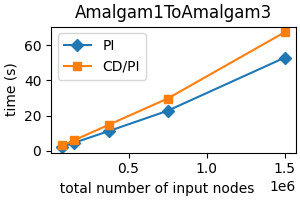}
    \end{subfigure}
    \caption{Horizontal scaling, with varying number of independent copies of the scenario.}
    \vspace{-1em}
    \label{fig:results-HS}
\end{figure}

\looseness=-1 \emph{Horizontal scalability.} We have investigated how well our framework scales with the number of rules and input labels.
We built larger scenarios by taking an increasing number of independent copies of the scenarios from Table~\ref{tab:scenarios}.
The resulting transformations reach over one hundred  rules and input labels, and over 1.5 million input nodes (in total).
Figure~\ref{fig:results-HS} reports the results for the $\mathsf{PersonData}$ and $\mathsf{Amalgam1ToAmalgam3}$ scenarios.

\looseness=-1 We observe the running time scales smoothly (almost linearly) as the number of copies increases. 
Results on other scenarios follow similar trends and are deferred to the appendix~\cite{TPG-github}.

\vspace{0.1cm}
\looseness=-1 \emph{Improvement over handcrafted scripts: a user study.} To compare empirically the readability and usability of the script-based approach and our framework, we ran an ad-hoc user study involving $12$ participants that were all already familiar with openCypher.

\looseness=-1 We compared the ability of the participants to understand the behavior of some provided openCypher scripts and transformations in clearly defined scenarios.
Only 25\% of the participants have been able to fully understand the behaviour of the openCypher scripts, whereas 67\% of them succeeded with transformations. 
In average, participants have scored 50\% on openCypher scripts and 90\% on our framework. 
Participants were also asked to compare openCypher scripts and our framework in terms of understandability, intuitiveness, and flexibility; they all have favored our framework by a great margin.
For space reasons, the questionnaire, the participant's answers, and the full discussion of the obtained results are deferred to the appendix~\cite{TPG-github}.

\begin{toappendix}
    \section{Experiments}
    \label{apx:experiments}

    \emph{User study.} The User study consists of four parts, respectively aiming to:
    \begin{itemize}
        \item Evaluate the Understandability of openCypher scripts; we asked four yes/no questions asking to the participants whether an assertion is true or not w.r.t. the behavior of a provided script in a concrete transformation scenario (e.g. Does this script create as many Director nodes as there are Person nodes that have an outgoing relationship of type DIRECTED to a Movie node?).
        \item Evaluate the Understandability of Property Graph Transformations; with four similar yes/no questions on a slightly different transformation to avoid biases;
        \item The third part required participants to modify some provided openCypher scripts and transformations to adapt to a new requirement. We collected the participants’ answers and checked them.
        \item The last part required the participants to give their opinion on the following questions and to indicate, in a range for 1 to 5 (3 is neutral) whether they found openCypher scripts and/or transformation rules better on:
        \begin{itemize}
            \item Which one of the two methods do you find easier to understand?
            \item Which one of the two methods do you find more intuitive? (Better for describing the desired output.)
            \item Which one of the two methods do you find more flexible? (Easier to adapt to a new specification.)
        \end{itemize}
    \end{itemize}

    We collected the answers of 12 participants, that were asked to self report their level of expertise in a range from (1 - Novice) to (5 - Expert) on the following topics:
    \begin{itemize}
        \item How would you rate your knowledge about databases? \newline
        The answers filled in by the participants ranged from 3 to 5, included.
        \item How would you rate your knowledge about openCypher? \newline
        The answers filled in by the participants ranged from 2 to 4, included.
        \item How would you rate your knowledge about the MERGE clause of openCypher? \newline
        The answers filled in by the participants ranged from 1 to 5, included.
        \item How would you rate your knowledge about property graph transformations? \newline
        The answers filled in by the participants ranged from 1 to 3, included.
    \end{itemize}
    We have a pool of people that all have prior exposure to openCypher (2-4) but a great diversity w.r.t. the knowledge of the MERGE clause of openCypher (the basic tool for updates in openCypher), i.e. from novice to expert.

    The results on the first two parts are as follow:
    \begin{itemize}
        \item The average number of correct answers is 50\% (2.0 out of 4) for the understandability of the openCypher scripts, and 90\% (3.6 out of 4) for the understandability of the transformation rules.
        \item 25.0\%, resp. 67\% of participants checked all the correct answers in the first, resp. second part.
        \item All participants scored higher on their individual understanding of transformation rules compared to openCypher scripts.
    \end{itemize}
    Given those results, it is extremely clear that it is very difficult for people to understand even the basic openCypher scripts used to transform property graphs, whereas our framework – despite being absolutely new to the participants, has been widely understood.

    The results on the last part are as follow (recall that 3 is neutral, 1 is strong preference for openCypher scripts and 5 is strong preference for transformation rules):
    \begin{itemize}
        \item Which one of the two methods do you find easier to understand? \newline
        Collected answers range from 1 to 5 with an average of 3.3.
        \item Which one of the two methods do you find more intuitive? (Better for describing the desired output.) \newline
        Collected answers range from 3 to 5 with an average of 3.8.
        \item Which one of the two methods do you find more flexible? (Easier to adapt to a new specification.) \newline
        Collected answers range from 3 to 5 with an average of 4.1.
    \end{itemize}

    Moreover, we have noticed that the participants having a low understanding of openCypher scripts (scored 2 or less out of 4 in the first part) have been more inclined to provide less credit to the transformation rules than other people. 
    So we decided to split the participants in two groups to investigate this more.

    The results on the last part are as follow only for the 8 people that have score 2 or lower out of 4 in their understanding of openCypher scripts (recall that 3 is neutral, 1 is strong preference for openCypher scripts and 5 is strong preference for transformation rules):
    \begin{itemize}
        \item Which one of the two methods do you find easier to understand? \newline
        Collected answers range from 1 to 5 with an average of 2.75.
        \item Which one of the two methods do you find more intuitive? (Better for describing the desired output.) \newline
        Collected answers range from 3 to 5 with an average of 3.9.
        \item Which one of the two methods do you find more flexible? (Easier to adapt to a new specification.) \newline
        Collected answers range from 3 to 5 with an average of 3.9.
    \end{itemize}

    The results on the last part are as follow only for the 4 people that have score 3 or higher out of 4 in their understanding of openCypher scripts (recall that 3 is neutral, 1 is strong preference for openCypher scripts and 5 is strong preference for transformation rules):
    \begin{itemize}
        \item Which one of the two methods do you find easier to understand? \newline
        Collected answers range from 1 to 5 with an average of 4.7.
        \item Which one of the two methods do you find more intuitive? (Better for describing the desired output.) \newline
        Collected answers range from 3 to 5 with an average of 3.7.
        \item Which one of the two methods do you find more flexible? (Easier to adapt to a new specification.) \newline
        Collected answers range from 3 to 5 with an average of 4.7.
    \end{itemize}
    It is therefore clear that those who have been less convinced that openCypher scripts are more error-prone and harder to interpret and analyze have not figured out for themselves that openCypher scripts are difficult to understand and manipulate. Moreover people with a good understanding of openCypher are clearly in favor that our framework is easier to understand.

    With this study, we empirically and experimentally demonstrated that the script-based approach can be error prone, hard to interpret and analyze (i.e., less usable) and that the improvement of usability and accuracy over handcrafted, script-based solutions have clearly been attested by a majority of the participants.

    \begin{figure*}[!ht]  
        \begin{subfigure}[b]{0.33\textwidth}
            \centering
            \includegraphics[width=5.5cm]{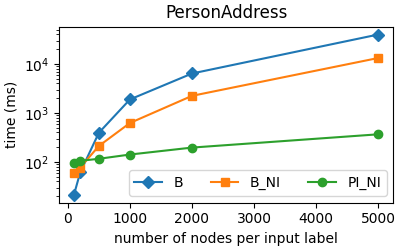}
        \end{subfigure}
        \hfill
        \begin{subfigure}[b]{0.33\textwidth}
            \centering
            \includegraphics[width=5.5cm]{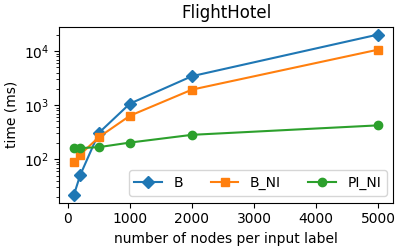}
        \end{subfigure}
        \hfill
        \begin{subfigure}[b]{0.33\textwidth}
            \centering
            \includegraphics[width=5.5cm]{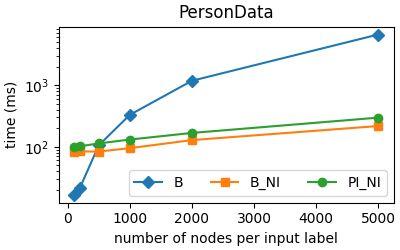}
        \end{subfigure}
        \caption{Run-time comparison with the baseline approach, depending on the number of input nodes of each type in $G$.}
        \label{fig:results-baseline}
    \end{figure*}

    \emph{Comparison with native Cypher approach.} Finally, we compared our framework (using PI\_NI) with ad-hoc transformation scripts (B-NI, B; respectively with and without node indexes), 
    such as the one presented in Figure~\ref{fig:motivating-scenario} (\subref{re:cypher}).
    The result over $\mathsf{PersonAddress}$, $\mathsf{FlightHotel}$ and $\mathsf{PersonData}$ are presented in Figure~\ref{fig:results-baseline}.
    For larger scenarios, such as $\mathsf{GUSToBIOSQL}$, $\mathsf{DBLPToAmalgam1}$ and $\mathsf{Amal}\-\mathsf{gam1ToAmalgam3}$, handcrafted transformation scripts are exceedingly large due to the number of rules and properties involved.

    We can observe that our solution clearly outperforms the handcrafted solutions in most of the cases. %on two out of the three analyzed scenarios, and 
    The only exception occurs when using the $\mathsf{PersonData}$ scenario, for which the B-NI baseline is slightly better than our solution, 
    while the B baseline is still outperformed. The underlying reason is due to the nature of this scenario, for which the \mintinline{cypher}|collect| clause contains only one element.

    \begin{figure}[!ht]
        \begin{subfigure}[b]{.45\textwidth}
            \centering
            \includegraphics[width=8cm]{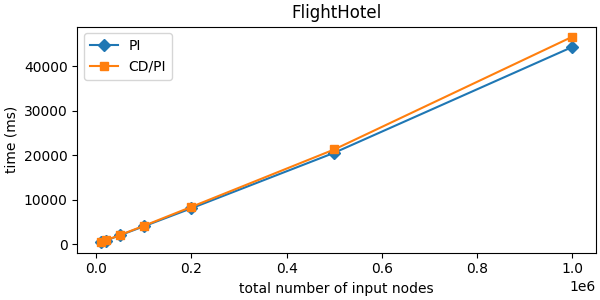}
        \end{subfigure}
        \hfill
        \begin{subfigure}[b]{.45\textwidth}
            \centering
            \includegraphics[width=8cm]{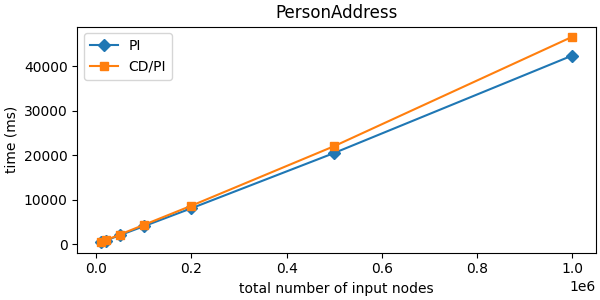}
        \end{subfigure}
        \hfill
        \begin{subfigure}[b]{.45\textwidth}
            \centering
            \includegraphics[width=8cm]{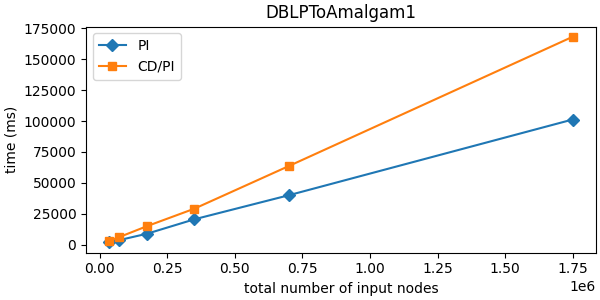}
        \end{subfigure}
        \hfill
        \begin{subfigure}[b]{.45\textwidth}
            \centering
            \includegraphics[width=8cm]{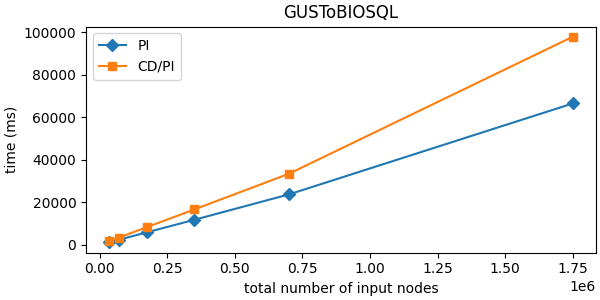}
        \end{subfigure}
        \caption{Horizontal scaling, with varying number of independent copies of the scenario.}
        \label{fig:results-HS-apx}
    \end{figure}

    \begin{figure*}[!ht]
        \begin{subfigure}[b]{0.5\textwidth}
            \captionsetup{justification=centering}
            \centering
            \includegraphics[width=7cm]{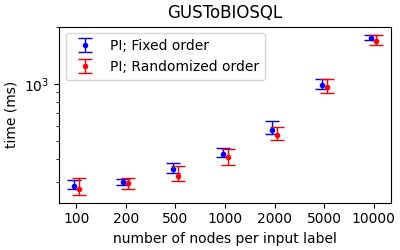}
        \end{subfigure}
        \hfill
        \begin{subfigure}[b]{0.5\textwidth}
            \captionsetup{justification=centering}
            \centering
            \includegraphics[width=7cm]{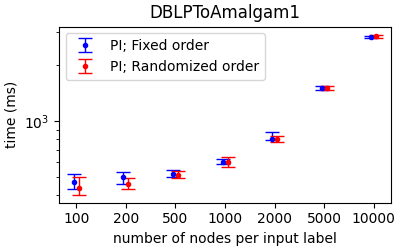}
        \end{subfigure}
        \caption{Average computation time for different orders of execution of the rules; error bars indicate minimum and maximum computation times observed over $20$ independent runs.}
        \label{fig:results-random-apx}
    \end{figure*}

    \begin{figure*}[!ht]
        \begin{subfigure}[b]{1\textwidth}
            \centering
            \includegraphics[width=17cm]{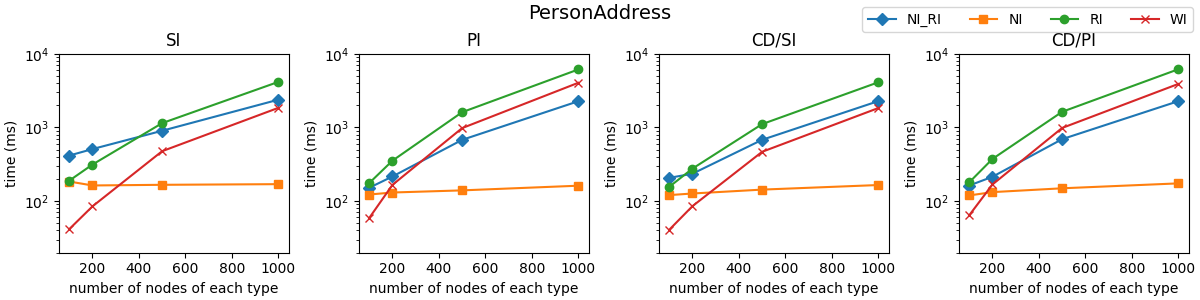}
        \end{subfigure}
        \hfill
        \begin{subfigure}[b]{1\textwidth}
            \centering
            \includegraphics[width=17cm]{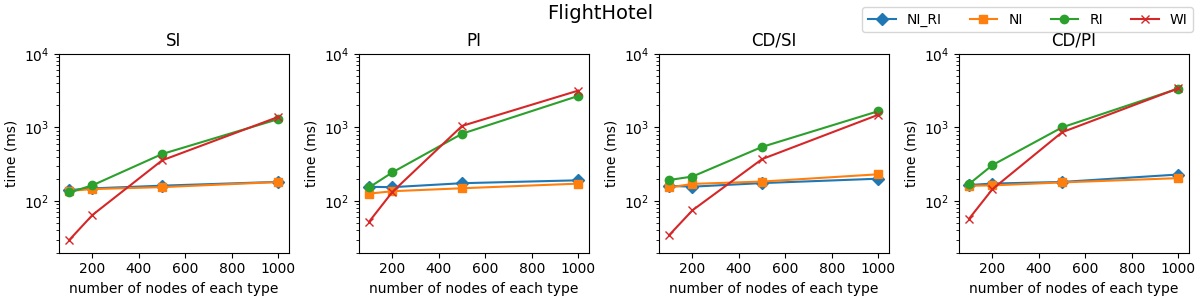}
        \end{subfigure}
        \hfill
        \begin{subfigure}[b]{1\textwidth}
            \centering
            \includegraphics[width=17cm]{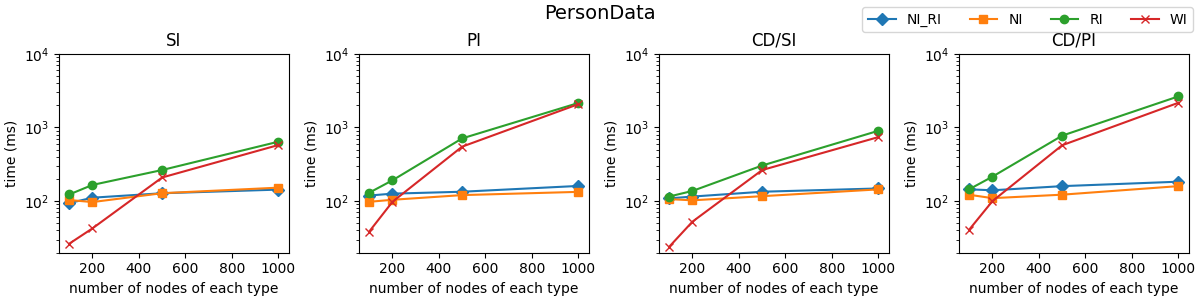}
        \end{subfigure}
        \hfill
        \begin{subfigure}[b]{1\textwidth}
            \centering
            \includegraphics[width=17cm]{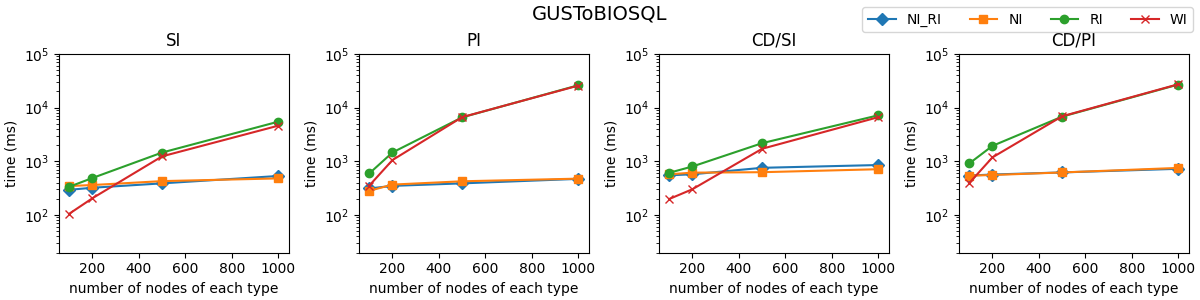}
        \end{subfigure}
        \hfill
        \begin{subfigure}[b]{1\textwidth}
            \centering
            \includegraphics[width=17cm]{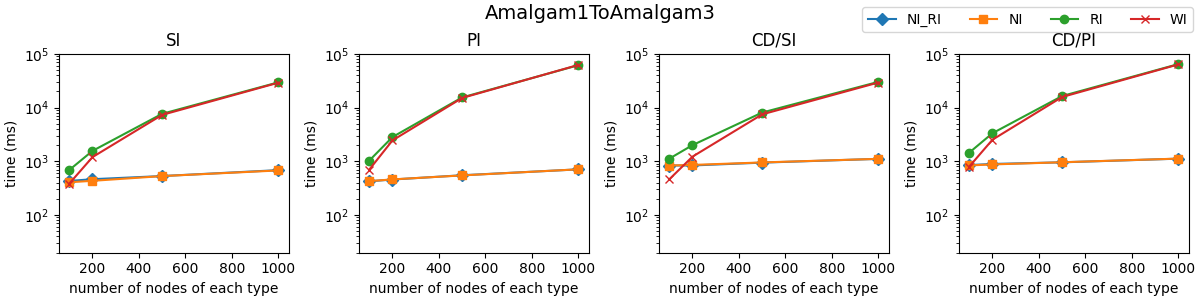}
        \end{subfigure}
        \caption{Impact of indexing strategies and implementation variants on the computation of $T(G)$.}
        \label{fig:results-index-apx}
    \end{figure*}
\end{toappendix}

\begin{table}[t]
    \caption{Running times and size of intermediate data (ICIJ).}
    \vspace{-1em}
    \centering\footnotesize
    \begin{tabular}{l|rr|rrr|rr}
        \toprule
        %\multicolumn{1}{l}{}        & \multicolumn{2}{c}{Computation times} & \multicolumn{3}{c}{Sizes} & \multicolumn{2}{c}{Per tuples}  \\
        Rules                       & $t_{int}$ & $t$ & $Int(G,T)$ & $|T(G)|$ & $O/I$                     & $t_{int}^e$ & $t^e$              \\
        \midrule
        $R1-R4$                     & 2,757 & 11,192 & 374,955 & 748,524                                & 1.996 & 0.007 & 0.015            \\
        $R5-R9$                     & 3,553 & 5,946 & 62,242 & 82,616                                   & 1.327 & 0.057 & 0.072            \\
        $R10-R13$                   & 15,509 & 36,775 & 1,906,686 & 1,905,547                           & 0.999 & 0.008 & 0.019            \\
        $R14-R17$                   & 9,667 & 21,006 & 493,556 & 1,173,720                              & 2.378 & 0.020 & 0.018            \\
        $R18$ only                  & 8,407 & 25,640 & 785,124 & 1,570,470                              & 2.000 & 0.011 & 0.016            \\
    \end{tabular}
    \vspace{-2em}
    \label{tab:icij} 
\end{table}

\subsection{Use-case Study: Improving Data Integration}
\label{subsec:icij}

\begin{figure*}[!ht]
    \centering\small
    \begin{subfigure}[b]{.95\textwidth}
    \begin{align*}
        \underset{\langle toLower(o.name) = toLower(p.name) \rangle}{ 
            (o : \mathsf{Officer}) \underset{}{ \arr{->}{ :\mathsf{similar}}{} }^{0..\infty}
            (\: : \mathsf{Officer}) \underset{}{ \arr{->}{ : \mathsf{registered\_address}}{} }
            (\: : \mathsf{Address}) \underset{}{ \arr{->}{ : \mathsf{similar}}{} }
            (\: : \mathsf{Address}) \underset{}{ \arr{<-}{ : \mathsf{registered\_address}}{} }
            (\: : \mathsf{Officer})  \underset{}{ \arr{<-}{ :\mathsf{similar}}{} }^{0..\infty} 
            (p : \mathsf{Officer})
        } \\ 
        \implies 
        \underset{}{ \left((o) : \mathsf{T\_Officer}\right) }  \: \underset{}{ \arr{->}{ \mathit{(\mathsf{})} \, : \, \mathsf{T\_Similar}}{\langle link = \mathsf{"similar\ name\ and\ address\ as"} \rangle} } \:
        \underset{}{ \left((p) : \mathsf{T\_Officer}\right) }
    \end{align*}
    \end{subfigure}
    \vspace{-0.25cm}
    \caption{Improved similarity detection ($R15$).}
    \vspace{-0.1cm}
    \label{fig:similarity-detection}
\end{figure*}

\begin{toappendix}
    \subsection{Offshore Leaks Dataset}
    \label{apx:subsec:icij}
    \begin{figure*}[!ht]
        \centering
        \begin{subfigure}[b]{1\textwidth}
            \centering
            \captionsetup{justification=centering}
            \begin{flalign}
                \tag{$R_1$}
                \underset{\langle sourceID = \mathsf{"Malta\ registry"}, \: a.address = a.address \rangle}{ 
                    (a : \mathsf{Address})
                } \implies 
                \underset{\langle source = a.sourceID \rangle}{ \left(x = (a) : \mathsf{T\_Address}\right) } & \: \underset{}{ \arr{->}{ \, : \, \mathsf{T\_LOCATED}}{}  } \:
                \underset{\langle name = a.country \rangle}{ \left(y = (a.country\_code) : \mathsf{T\_Country}\right) } \label{rule:1} \\ 
                \tag{$R_2$}
                \underset{\langle sourceID = \mathsf{"Malta\ registry"}, \: \neg \left( a.address = a.address \right) \rangle}{ 
                    (a : \mathsf{Address})
                } \implies
                 \underset{\langle source = a.sourceID \rangle}{ \left(x = (a) : \mathsf{T\_Address}\right) } & \label{rule:2}
                \\ 
                \tag{$R_3$}
                \underset{\langle \neg \left( sourceID = \mathsf{"Malta\ registry"} \right), \: a.address = a.address \rangle}{ 
                    (a : \mathsf{Address})
                } \implies 
                \underset{\langle source = a.sourceID \rangle}{ \left(x = (a) : \mathsf{T\_Address}\right) } & \: \underset{}{ \arr{->}{ \, : \, \mathsf{T\_LOCATED}}{}  } \:
                \underset{\langle name = a.country \rangle}{ \left(y = (a.country\_codes) : \mathsf{T\_Country}\right) } \label{rule:3} \\ 
                \tag{$R_4$}
                \underset{\langle \neg \left( sourceID = \mathsf{"Malta\ registry"} \right), \: \neg \left( a.address = a.address \right) \rangle}{ 
                    (a : \mathsf{Address})
                } \implies
                \underset{\langle source = a.sourceID \rangle}{ \left(x = (a) : \mathsf{T\_Address}\right) } & \label{rule:4}
            \end{flalign}
            \caption{Refactoring registered addresses ($R1-R4$).}
            \label{sfig:1-4}
        \end{subfigure}
        \begin{subfigure}[b]{1\textwidth}
            \centering
            \captionsetup{justification=centering}
            \begin{flalign}
                \tag{$R_5$}
                (i : \mathsf{Intermediary}) \: \underset{}{ \arr{->}{ \, : \, \mathsf{registered\_address}}{}  } \: (a : \mathsf{Address})
                \implies &
                (x = (i) : \mathsf{T\_Intermediary}) \: \underset{}{ \arr{->}{ \, : \, \mathsf{T\_REG\_ADDRESS}}{}  } \: (y = (a) : \mathsf{T\_Address}) \label{rule:5} \\ 
                \notag
                \underset{\langle \neg \left(address = \mathsf{""} \right), \: address = address, \: country\_code = country\_code \rangle}{ 
                    (i : \mathsf{Intermediary})
                } \implies &
                (x = (i) : \mathsf{T\_Intermediary}) \: \underset{}{ \arr{->}{ \, : \, \mathsf{T\_REG\_ADDRESS}}{}  } \: 
                \underset{\langle source = i.sourceID \rangle}{ (y = (i.address) : \mathsf{T\_Address}) }, \\
                & \tag{$R_6$}
                \left( y \right) \: 
                \underset{}{ \arr{->}{ \, : \, \mathsf{T\_LOCATED}}{} } \:
                \underset{\langle name = i.countries \rangle}{ \left(z = (i.country\_codes) : \mathsf{T\_Country} \right) } \label{rule:6}
                \\
                \tag{$R_7$}
                \underset{\langle \neg \left(address = \mathsf{""} \right), \: address = address, \: \neg \left( country\_code = country\_code \right) \rangle}{ 
                    (i : \mathsf{Intermediary})
                } \implies &
                (x = (i) : \mathsf{T\_Intermediary}) \: \underset{}{ \arr{->}{ \, : \, \mathsf{T\_REG\_ADDRESS}}{}  } \: 
                \underset{\langle source = i.sourceID \rangle}{ (y = (i.address) : \mathsf{T\_Address}) } \label{rule:7} \\ 
                \tag{$R_8$}
                \underset{\langle \neg \left(address = \mathsf{""} \right), \: address = address, \: \neg \left( country\_code = country\_code \right) \rangle}{
                    (i : \mathsf{Intermediary}) \: \underset{}{ \arr{->}{ \, : \, \mathsf{intermediary\_of}}{}  } \: (e : \mathsf{Entity})
                } \implies &
                (x = (i) : \mathsf{T\_Intermediary}) \: \underset{}{ \arr{->}{ \, : \, \mathsf{T\_REG\_ADDRESS}}{}  } \: 
                \underset{\langle source = i.sourceID \rangle}{ (y = (e.address) : \mathsf{T\_Address}) } \label{rule:8} \\
                \notag
                \underset{\langle \neg \left(address = \mathsf{""} \right), \: address = address, \: country\_code = country\_code \rangle}{
                    (i : \mathsf{Intermediary}) \: \underset{}{ \arr{->}{ \, : \, \mathsf{intermediary\_of}}{}  } \: (e : \mathsf{Entity})
                } \implies &
                (x = (i) : \mathsf{T\_Intermediary}) \: \underset{}{ \arr{->}{ \, : \, \mathsf{T\_REG\_ADDRESS}}{}  } \: 
                \underset{\langle source = i.sourceID \rangle}{ (y = (e.address) : \mathsf{T\_Address}) }, \\
                & \tag{$R_9$}
                \left( y \right) \: 
                \underset{}{ \arr{->}{ \, : \, \mathsf{T\_LOCATED}}{} } \:
                \underset{\langle name = i.countries \rangle}{ \left(z = (e.country\_codes) : \mathsf{T\_Country} \right) } \label{rule:9}
            \end{flalign}
            \caption{Uniformizing address information for intermediaries ($R5-R9$).}
            \label{sfig:5-9}
        \end{subfigure}
        \begin{subfigure}[b]{1\textwidth}
            \centering
            \captionsetup{justification=centering}
            \begin{flalign}
                \tag{$R_{10}$}
                (i : \mathsf{Intermediary})
                \implies &
                \underset{\langle name = i.name, \: status = i.status, \: valid\_until = i.valid\_until, \: source = i.sourceID \rangle}{ \left(x = (i) : \mathsf{T\_Intermediary}\right) } \label{rule:10} \\ 
                \tag{$R_{11}$}
                (a : \mathsf{Address}) 
                \implies &
                \underset{\langle address = a.address, \: orig\_addr = a.orig\_addr, \: valid_\_until = a.valid_\_until \rangle}{ \left(x = (a) : \mathsf{T\_Address}\right) } \label{rule:11}
                \\ 
                \tag{$R_{12}$}
                (e : \mathsf{Entity})
                \implies &
                \underset{\langle name = e.name, \: orig\_name = e.orig\_name, \: inact\_date = e.inact\_date, \: inc\_date = e.inc\_date, \:
                \dots, \: source = e.sourceID \rangle}{ \left(x = (e) : \mathsf{T\_Entity}\right) } \label{rule:12} \\ 
                \tag{$R_{13}$}
                (o : \mathsf{Officer})
                \implies &
                \underset{\langle name = o.name, \: status = o.status, \: source = o.sourceID \rangle}{ \left(x = (o) : \mathsf{T\_Officer}\right) } \label{rule:13}
            \end{flalign}
            \caption{Exporting the nodes ($R10-R13$).}
            \label{sfig:10-13}
        \end{subfigure}
        \caption{Rules of the ICIJ database transformation.}
        \label{fig:icij-rules}
    \end{figure*}
    \begin{figure*}[!ht]
        %\captionsetup{font={color=blue, bf}}
        \centering
        \begin{subfigure}[b]{1\textwidth}
            \centering
            \captionsetup{justification=centering}
            \begin{flalign}
                \tag{$R_{14}$}
                (o : \mathsf{Officer}) \: \underset{}{ \arr{->}{ \, : \, \mathsf{similar}}{}  } \: (p : \mathsf{Officer})
                \implies &
                (x = (o) : \mathsf{T\_Officer}) \: \underset{}{ \arr{->}{ \mathit{(\mathsf{})} \, : \, \mathsf{T\_Similar}}{\langle link = \mathsf{"similar\ name\ and\ address\ as"} \rangle} } \: 
                (y = (p) : \mathsf{T\_Officer}) \label{rule:14} \\ 
                \tag{$R_{16}$}
                (a : \mathsf{Address}) \: \underset{}{ \arr{->}{ \, : \, \mathsf{same\_as}}{}  } \: (b : \mathsf{Address})
                \implies &
                (x = (a) : \mathsf{T\_Address}) \: \underset{}{ \arr{->}{ \, : \, \mathsf{T\_SAME\_AS}}{}  } \: 
                (y = (b) : \mathsf{T\_Address}) \label{rule:16} \\ 
                \tag{$R_{17}$}
                (o : \mathsf{Officer}) \: \underset{}{ \arr{->}{ \, : \, \mathsf{registered\_address}}{}  } \: (a : \mathsf{Address})
                \implies &
                (x = (o) : \mathsf{T\_Officer}) \: \underset{}{ \arr{->}{ \, : \, \mathsf{T\_REGISTERED\_ADDRESS}}{}  } \: 
                (y = (a) : \mathsf{T\_Address}) \label{rule:17}
            \end{flalign}
            \caption{Improving similarity detection ($R14-R17$).}
            \label{sfig:14-17}
        \end{subfigure}
        \begin{subfigure}[b]{1\textwidth}
            \centering
            \captionsetup{justification=centering}
            \begin{flalign}
                \notag
                \underset{\langle jurisdiction\_desc = jurisdiction\_desc \rangle}{
                    (e : \mathsf{Entity})
                } \implies &
                (x = (e) : \mathsf{T\_Entity}) \: \underset{}{ \arr{->}{ \, : \, \mathsf{T\_IN\_JURIS}}{}  } \: 
                \underset{\langle juris = e.jurisdiction\_desc \rangle}{ (y = (e.jurisdiction\_desc) : \mathsf{T\_Jurisdiction}) }, \\
                & \tag{$R_{18}$}
                \left( y \right) \: 
                \underset{}{ \arr{->}{ \, : \, \mathsf{T\_RELATED}}{} } \:
                \left(z = (e.jurisdiction) : \mathsf{T\_Country} \right) \label{rule:18}
            \end{flalign}
            \caption{Refactoring jurisdictions ($R18$).}
            \label{sfig:18}
        \end{subfigure}
        \caption{Rules of the ICIJ database transformation. (Bis)}
        \label{fig:icij-rules-bis}
    \end{figure*}
\end{toappendix}

In this section, we want to compare the cost of running the whole transformation compared to the cost of querying the source property graph to extract the bindings (intermediate data).
To this end, we use a real-world dataset, the \emph{Offshore Leaks Database and guide from the International Consortium of Investigative Journalists (ICIJ)}~\cite{ICIJ-github}, a property graph with 1,908,466 nodes and 3,193,390 edges taken from~\cite{10.14778/3611479.3611506}.
This dataset consolidates data from several leaks (\emph{Panama Papers}, \emph{Bahamas Leaks}, etc.) collected by ICIJ over a period of ten years, but still presents the consolidated data in a heterogeneous manner.
The dataset contains information about \emph{entities} (off-shore companies), \emph{officers} of those, \emph{intermediaries} (middlemen who help set up off-shore companies), and \emph{jurisdictions} (countries or territories where off-shore companies are registered).
We have designed a modular $18$-rule transformation aiming to uniformize the presentation of the information contained in the graph.
The rules are grouped into 5 subsets, each addressing a specific refactoring goal motivated below.
For space reasons, we have deferred the rules themselves to the appendix~\cite{TPG-github}.

\emph{Refactoring registered addresses ($R1-R4$).} 
The ICIJ database contains the registered \emph{addresses} of the \emph{officers}, and \emph{entities}.
These rules are responsible for creating nodes representing \emph{countries} and linking to the \emph{addresses}. 
Because the data is semi-structured and collected from multiple sources, information may be stored in attributes that have different names, or may even not be available at all.
All these cases are covered by these four rules.

\emph{Uniformizing address information for intermediaries ($R5-R9$).} 
After careful investigation, we found that the registered \emph{address} of \emph{intermediaries} can be stored in three different ways in the database: 
(i) an \emph{intermediary} can have a direct relationship with an \emph{address}, 
(ii) the \emph{address} can be stored in the properties of the node itself, and 
(iii) when neither of the two previous cases applies, it is necessary to retrieve the \emph{address} of an \emph{entity} linked to this \emph{intermediary}. 
These rules permit to consistently store address information.

\emph{Exporting the nodes ($R10-R13$).} 
These rules copy the node information from the source to the target; they are necessary to preserve all the information from the original graph.

\emph{Improving similarity detection ($R14-R17$).} 
Because the dataset consolidates multiple leaks, certain specific relationships, such as \emph{similar} and \emph{same\_as}, are used to  indicate that some \emph{officers} (resp. \emph{addresses}) are likely to represent the same real life entity.
These rules focus on exporting this data and improving the similarity detection. 
This is illustrated by Rule $15$ shown in Figure~\ref{fig:similarity-detection} which composes the relationships \emph{similar} and \emph{same\_as} to ensure that both its endpoints correspond to \emph{officers} having the same address. 
(This is because \emph{similar} encompasses address similarity.)
Then, it checks whether their names are also similar. If both conditions hold, it safely adds a similarity edge between the endpoints in the output.

\looseness=-1 \emph{Refactoring jurisdictions ($R18$).} 
The last rule is responsible for connecting the \emph{jurisdictions} with their associated \emph{countries}; this information is not explicitly stored in the initial database.

\looseness=-1 \emph{Results.} Our experimental results are reported in Table~\ref{tab:icij}.
We report the time $t_{int}$ (in \emph{ms}) the database takes to retrieve the intermediate data; the total time $t$ of running the transformation (extracting the bindings and constructing the output); the size $Int(G,T)$ of intermediate data; the size of the output $T(G)$; the ratio $O/I$ of the size of the output to the size of intermediate data.
To account for the differences in the sizes of the outputs of the respective tasks, we also report the average time $t_{int}^e$ taken to produce each binding of the intermediate result, and the average time $t^e$ taken to construct each element of the output.
We break down the reported values into groups of rules corresponding to the aforementioned integration tasks.

There are several things that we can learn from Table~\ref{tab:icij}. 
First, the overhead $t^e / t_{int}^e$ of turning the intermediate results into a proper property graph is reasonable. 
For Rules $R14-R17$, it is even comparatively more efficient to compute the output property graph (overhead is $0.9$).
The worst case is for rules $R10-R13$ exhibiting an overhead of $2.4$.
Second, the ratio $O/I$ is also reasonable, ranging from $1$ to $2.4$.
This shows that, in practical contexts, $Int(G,T)$ can be assumed to have a size comparable to $|T(G)|$.

Thus, we have demonstrated that the overhead incurred by producing a property graph rather than a set of bindings is acceptable for a realistic transformation in a real-life integration scenario.

\section{Related Work}
\label{sec:rw}

\looseness=-1 \emph{Schema mapping and data exchange.}
Specifying the relationship between two relational (or XML) schemas using a set of declarative assertions is a task known as \emph{schema mapping}~\cite{10.1145/1065167.1065176,fagin_data_2005, bellahsene2011schema}.
This relation, is usually \emph{non functional}, i.e. given an input instance $I$, several target instances satisfying the mapping constraints exist.

\looseness=-1
Schema mappings and data exchange have been studied in~\cite{10.1145/2448496.2448520, 10.1145/3034786.3056113} for graph data\-bases.
The mapping languages considered are based on classical graph database queries such as regular path que\-ries~\cite{Barcel2012RelativeEO},
limited in their expressivity by not supporting data values.
Moreover, answering queries on the target is already intractable in data complexity for RPQs~\cite{10.1145/2448496.2448520} and undecidable for data RPQs~\cite{10.1145/3034786.3056113}.
In comparison, our transformation framework provides more flexibility by including the support for data values, and any target query can be answered by simple execution on the produced property graph.

\emph{Graph transformations.} Graph database transformations defined using Datalog-like rules based on acyclic conjunctive two-way regular path queries have been investigated in~\cite{10.1145/3584372.3588654}.
They study three fundamental static analysis problems: \emph{type checking}, \emph{equivalence} of transformations under graph schemas, and \emph{schema elicitation}.
They show all these problems to be in \textsc{ExpTime}.

A key difference with our work lies in the graph database model they consider, which does not have data values. 
We have seen that dealing with data values gives rise to the consistency checking problem, which is key to understanding if a property graph transformation is well-defined.
Moreover, their query language -- a fragment of Datalog, is not practical for querying property graphs~\cite{7ad59132cb3c45e2851f565fbb703cea}.
Another difference is that they are using a single dedicated node constructor for each label.
In Section~\ref{sec:pgt} and~\ref{sec:translation}, we have seen that this approach is too rigid for dealing with multiple labels.

\looseness=-1
\emph{Object-creating functions.} The Skolem functions we use in our constructors resemble to the object creating functions that are used in the object-oriented database model~\cite{10.1145/290179.290182, 10.5555/645916.671975}.
Among transformation languages based on oid generation, StruQL~\cite{10.1145/262762.262763} specifically operates on object-oriented semi-structured instances where nodes can either be data values or contain an oid and labeled edges can connect oid nodes to oid or value nodes.
The major difference with our work is that they have \emph{multi-valued} attributes: i.e., an oid node may be connected via $a$-edges to several value nodes.
Hence, additional integrity constraints are necessary to ensure a correct modeling of property graphs in their model.
Therefore, they did not take into account the problem of consistency.

\looseness=-1 \emph{Interoperability of graph data.} Although RDF, RDF-star and the property graph data model share striking similarities, both being based on elementary graph concepts, like nodes and edges,
intricate interoperability issues arise when attempting to exchange data between them.
RDF-star notably allows for annotating RDF triples with metadata annotations, which are notoriously difficult to capture within the property graph data model as witnessed in~\cite{abuoda_transforming_2022}.

\looseness=-1 The main concern of transformation languages between graph data models is thus primarily focused on solving the well-known impedance mismatch problem~\cite{bernstein_model_2007},
which does not arise in our setting because we have property graphs for both input and output.
Our transformation language can be thus more expressive, and can be executed by the graph database management system itself.

\emph{Mining the identities of nodes across networks.} Network alignment is a technique for finding node correspondences between two or more networks. It can be used, for example, to associate nodes from different social networks with the same user~\cite{10.1145/3340531.3412168}.
Nodes are identified based on their similarities with respect to both their features (i.e., their properties) and their neighborhood.

While these methods are not part of graph transformation formalisms, they can be used to guide the  construction of graph transformations.  
For instance, in Section~\ref{subsec:icij}, the results of network alignment (the similarity edges in the Offshore Leaks Database) were leveraged to better integrate data coming from multiple leaks.

\section{Conclusion}
\label{sec:conclusion}

Our research is the first to lay the theoretical foundations for declarative property graph transformations, and facilitate practical solutions for turning such specifications into executable scripts in modern property graph query languages. 
New challenges arise from the specification of property-aware transformations, notably the task of checking if a transformation is \emph{consistent}. 
Using a proof-of-concept implementation of our formalism in openCypher, we showcase the efficiency of our approach for transforming property graphs for both real-world and synthetic datasets. 

\looseness=-1 This work paves the way for obtaining compositional semantics for graph query languages. 
As a future direction, we will investigate the model extensions needed for the above semantics, by addressing label and path variables, and aggregates.
Meanwhile, our framework can already seamlessly support the group variables of GPC because those are list of identifiers that can be flattened into the identifier lists of the constructors.
Finally, we will investigate how to assist users in the design process of their transformation rules; for instance by lifting \emph{schema matching} techniques~\cite{bernstein_model_2007, bellahsene2011schema} from relational to property graph schemas.

\onecolumn
\begin{multicols}{2}
\bibliographystyle{ACM-Reference-Format}
\bibliography{biblio}
\end{multicols}
%\balance

\end{document}